\theoremstyle{plain}
\definecolor{ablue}{rgb}{0.3,0.4,0.8}
\definecolor{ared}{rgb}{0.95,0.4,0.4}
\definecolor{agreen}{rgb}{0,0.5,0.25}
\definecolor{ayellow}{rgb}{0.95,0.85,0.3}
\newcommand{\Reals}{\mathbb{R}}
\newcommand{\Hyp}{\mathbb{H}}
\newcommand{\cG}{\mathcal{G}}
\newcommand{\cC}{\mathcal{C}}
\newcommand{\cP}{\mathcal{P}}
\newcommand{\eps}{\varepsilon}
\newcommand\eqdef{\mathrel{\overset{\makebox[0pt]{\mbox{\normalfont\tiny\sffamily def}}}{=}}}
\DeclareMathOperator{\dist}{dist}
\newcommand{\len}{\mathbf{length}}
\newcommand{\an}{\sphericalangle}
\DeclareMathOperator{\poly}{poly}
\DeclarePairedDelimiter{\norm}{\lVert}{\rVert}
\newcommand{\Cr}{\mathsf{Cr}}
\newcommand{\End}{\mathsf{End}}
\renewcommand{\leq}{\leqslant}
\renewcommand{\geq}{\geqslant}
\renewcommand{\rho}{\varrho}
\newcommand{\etal}{\emph{et~al.}}
\newcommand{\IS}{\textsc{Independent Set}\xspace}
\newcommand{\DS}{\textsc{Dominating Set}\xspace}
\newcommand{\ST}{\textsc{Steiner Tree}\xspace}
\newcommand{\HC}{\textsc{Hamiltonian Cycle}\xspace}
\newcommand{\HTSP}{\textsc{Hyperbolic TSP}\xspace}
\newcommand{\BTSP}{\textsc{Hyperbolic Path Cover}\xspace}
\newcommand{\ETSP}{\textsc{Euclidean TSP}\xspace}
\newcommand{\TSPlong}{Traveling Salesman Problem\xspace}
\newcommand{\NP}{\mbox{\ensuremath{\mathsf{NP}}}\xspace}
\newcommand{\defproblem}[3]{
\begin{quotation}
\noindent
  \textsc{#1}  \\
  {\bf{Input:}} #2 \\
  {\bf{Question:}} #3
\end{quotation}
}
\DeclareFontFamily{U}{tipa}{}
\DeclareFontShape{U}{tipa}{m}{n}{<->tipa10}{}
\newcommand{\arc@char}{{\usefont{U}{tipa}{m}{n}\symbol{62}}}%
\newcommand{\arc}[1]{\mathpalette\arc@arc{#1}}
\newcommand{\arc@arc}[2]{%
  \sbox0{$\m@th#1#2$}%
  \vbox{
    \hbox{\resizebox{\wd0}{\height}{\arc@char}}
    \nointerlineskip
    \box0
  }%
}
\DeclareRobustCommand{\bfseries}{%
  \not@math@alphabet\bfseries\mathbf
  \fontseries\bfdefault\selectfont
  \boldmath
}
\title{A quasi-polynomial algorithm for well-spaced hyperbolic TSP}
\author{S\'andor Kisfaludi-Bak}{Max Planck Institute for Informatics, Germany}{sandor.kisfaludi-bak@mpi-inf.mpg.de}{}{}
\authorrunning{S\'andor Kisfaludi-Bak} 
\keywords{Computational geometry, Hyperbolic geometry, Traveling salesman}
\begin{document}

\maketitle

\begin{abstract}
We study the traveling salesman problem in the hyperbolic plane of Gaussian curvature $-1$. Let $\alpha$ denote the minimum distance between any two input points. Using a new separator theorem and a new rerouting argument, we give an $n^{O(\log^2 n)\max(1,1/\alpha)}$ algorithm for \HTSP. This is quasi-polynomial time if $\alpha$ is at least some absolute constant, and it grows to $n^{O(\sqrt{n})}$ as $\alpha$ decreases to $\log^2 n/\sqrt{n}$. (For even smaller values of $\alpha$, we can use a planarity-based algorithm of Hwang et al. (1993), which gives a running time of $n^{O(\sqrt{n})}$.)
\end{abstract}

\section{Introduction}
The \TSPlong (or TSP for short) is very widely studied in combinatorial optimization and computer science in general, with a long history. In the general formulation, we are given a complete graph $G$ with positive weights on its edges. The task is to find a cycle through all the vertices (i.e., a Hamiltonian cycle) of minimum weight. The first non-trivial algorithm (with running time $O(2^nn^2)$) was given by Held and Karp~\cite{HeldK61}, and independently by Bellman~\cite{Bellman-TSP}. The problem was among the first problems to be shown \NP-hard by Karp~\cite{Karp10}.

A very important case of TSP concerns metric weight functions, where the edge weights satisfy the triangle inequality. The problem has a $(3/2)$-approximation due to Christofides~\cite{Chr76}, which is still unbeaten. On the other hand, it is \NP-hard to approximate \textsc{Metric} TSP within a factor of $123/122$~\cite{KarpinskiLS15}. Fortunately, the problem is more tractable in low-dimensional geometric spaces. Arora~\cite{Arora98} and independently, Mitchell~\cite{Mitchell99} gave the first polynomial time approximation schemes (PTASes) for the low-dimensional \ETSP problem, where vertices correspond to points in $\Reals^d$ and the weights are defined by the Euclidean distance between the given points. The PTAS was later improved by Rao and Smith~\cite{RaoS98}, and after two decades, several more general approximation schemes are known. In particular, there is a PTAS in metric spaces of bounded doubling dimension by Bartal~\etal~\cite{TSPdoubling16}, and in metric spaces of negative curvature by Krauthgamer and Lee~\cite{KrauthgamerL06}. The PTAS of~\cite{KrauthgamerL06} applies in the hyperbolic plane.

Turning to the exact version of the problem in the geometric setting, we can again get significant improvements over the best known $O(2^n\poly(n))$ running time for the general version. In the Euclidean case, the first set of improved algorithms were proposed in the plane by Kann~\cite{Kann92} and by~Hwang~\etal~\cite{HwangCL93} with running time $n^{O(\sqrt{n})}$. Later, an algorithm in $\Reals^d$ with running time $n^{O(n^{1-1/d})}$ was given by Smith and Wormald~\cite{SmithW98}. The latest improvement to $2^{O(n^{1-1/d})}$ by De Berg~\etal~\cite{BergBKK18} came with a matching lower bound under the Exponential Time Hypothesis (ETH)~\cite{ethcite}. To our knowledge, the exact version of the problem in hyperbolic space has not been studied yet.

Given the history of the problem, the PTAS results and the Euclidean exact algorithm, one might expect that the hyperbolic case is very similar to the Euclidean, and a good hyperbolic TSP algorithm will have a running time of $n^{O(n^{\delta})}$ for some constant $\delta$. In this paper, we show that we can often get significantly faster algorithms. Let $\Hyp^2$ denote the hyperbolic plane of Gaussian curvature $-1$. The first hopeful sign is that $\Hyp^2$ exhibits special properties when it comes to intersection graphs. Recently, the present author has given quasi-polynomial algorithms for several classic graph problems in certain hyperbolic intersection graphs of ball-like objects~\cite{hyperbolic_inters}. The studied problems include \IS, \DS, \ST, \HC and several other problems that are \NP-complete in general graphs. Interestingly, a polynomial time algorithm was given for the \HC problem in hyperbolic unit disk graphs. The question arises whether a quasi-polynomial algorithm is available for TSP in $\Hyp^2$? Given that the best running times for \HC in unit disk graphs in $\Reals^2$ and for \ETSP are identical, perhaps even polynomial time is achievable for \HTSP?

Unfortunately, a quasi-polynomial algorithm is unlikely to exist for the general \HTSP problem: the lower bound of~\cite{frameworkpaper} in grids can be carried over to $\Hyp^2$, which rules out a $2^{o(\sqrt{n})}$ algorithm under the Exponential Time Hypothesis (ETH)~\cite{ethcite}. This however relies on embedding a grid-like structure in $\Hyp^2$ efficiently, which seems to be possible only if the points are densely placed. Since $\Hyp^2$ is locally Euclidean, it comes as no surprise that we cannot beat the Euclidean running time for dense point sets.

For this reason, we use a parameter measuring the density of the input point set. We say that the input point set $P$ is \emph{$\alpha$-spaced} if for any pair of distinct points $p,p'\in P$, we have that $\dist(p,p')\geq \alpha$. 
Our main contribution is the following theorem.

\begin{theorem}\label{thm:main}
Let $P$ be an $\alpha$-spaced set of points in the hyperbolic plane of curvature $-1$. Then the shortest traveling salesman tour of $P$ can be computed in $n^{O(\log^2 n) \cdot \max(1,1/\alpha)) }$ time.
\end{theorem}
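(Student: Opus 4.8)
The plan is to follow the classical geometric divide-and-conquer paradigm of Arora–Mitchell-style algorithms, but using the hyperbolic separator theorem announced in the abstract instead of a Euclidean shifted-grid argument. First I would establish a bounding region: since $P$ is $\alpha$-spaced, an optimal tour cannot be too long, and standard packing bounds in $\Hyp^2$ show that $P$ fits inside a hyperbolic disk of radius $O(\log n \cdot \max(1,1/\alpha))$ (the volume of a radius-$R$ disk grows like $e^R$, so $n$ points that are $\alpha$-separated and whose tour we may assume is near-optimal must be contained in such a disk; otherwise a detour argument contradicts optimality). This is the step where the $\max(1,1/\alpha)$ factor in the exponent enters, and I expect it to be mostly routine packing geometry.

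Next I would set up the recursion. The key structural tool is the new separator theorem: any set of $m$ points in $\Hyp^2$ (or the relevant portion of an optimal tour) admits a balanced separator — a hyperbolic line or short curve — crossed by only $O(\log m)$ edges of the optimal tour, splitting the instance into pieces each containing at most a constant fraction of the points. One then guesses, for each recursive subproblem, the $O(\log m)$ crossing points (``portals'' on the separator) together with the cyclic/matching pattern in which the tour connects them; there are $m^{O(\log m)}$ such guesses. The recursion tree has depth $O(\log n)$, so the total number of guessed boundary interfaces multiplies to $n^{O(\log^2 n)}$, and combined with the radius bound above we get the claimed $n^{O(\log^2 n)\max(1,1/\alpha)}$ running time. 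At the leaves of the recursion the subproblems have constant size and are solved by brute force.

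The correctness argument requires the \emph{rerouting argument} mentioned in the abstract: we must show that an optimal tour can be perturbed, at negligible (in fact zero, since we want an exact algorithm) cost, so that it crosses each separator only at the guessed portal locations. In the Euclidean PTAS this is where one pays a $(1+\eps)$ factor; here, because we are computing the tour \emph{exactly}, the rerouting must be done so that the separator is crossed by the optimal tour at a bounded number of \emph{arbitrary} points, and the dynamic program remembers the actual coordinates of those crossings rather than snapping them to a fixed portal set. Concretely, I would argue: (i) by the separator theorem the optimal tour crosses the separating curve $O(\log m)$ times; (ii) hence it can be decomposed along the separator into $O(\log m)$ sub-paths on each side, and the DP state ``(set of points on this side, the $O(\log m)$ crossing points on the boundary, the pairing of crossing points into sub-paths)'' is sufficient to reconstruct the optimum by merging; (iii) the number of relevant crossing-point configurations is polynomially bounded per level because we may assume each crossing point is one of the $O(\log m)$ points where an edge of the tour between two consecutive input points meets the separator, and enumerating candidate such edges costs $m^{O(\log m)}$.

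The main obstacle I anticipate is the separator theorem itself and making its interaction with the DP clean: unlike in the Euclidean case, there is no canonical grid, so the separating curve and the ``portal'' set must both be chosen in a scale-free, hyperbolic-intrinsic way, and one must verify that the recursion actually terminates with the advertised depth (the balance condition must hold with respect to the number of points \emph{and} the crossing pattern must not blow up as we recurse). Ensuring that the $O(\log m)$ crossings per separator do not accumulate across $O(\log n)$ levels into something worse than $n^{O(\log^2 n)}$ — i.e., that the guessed interfaces at different levels can be chosen independently and their costs only multiply rather than compound super-polynomially — is the delicate bookkeeping step, and is presumably exactly what the ``new rerouting argument'' is designed to handle.
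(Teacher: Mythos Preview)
Your high-level divide-and-conquer skeleton is right, and you correctly identify that one enumerates the tour edges crossing the separator and recurses with depth $O(\log n)$. However, two key pieces are misplaced, and as written the argument does not produce the claimed exponent.

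First, the $\max(1,1/\alpha)$ factor does \emph{not} come from a bounding-disk radius. In fact no such bound is available: $n$ points that are $\alpha$-spaced can have arbitrarily large diameter (place them along a line), so there is no containing disk of radius $O(\log n \cdot \max(1,1/\alpha))$, and the ``detour argument'' you allude to cannot fix this. What the paper uses instead is purely local to the separator: the line $\ell$ is chosen through a centerpoint so that an open double cone of half-angle $\pi/(2n)$ around $\ell$ is empty of input points, and an angle-of-parallelism computation then shows that any tour edge crossing $\ell$ must do so on a fixed segment $tt'$ of length $O(\log n)$. This length is independent of~$\alpha$.

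Second --- and this is where your running-time bookkeeping breaks --- the separator is \emph{not} crossed only $O(\log m)$ times; the correct bound is $O((\log m)\cdot\max(1,1/\alpha))$, and this is precisely where the $1/\alpha$ enters the exponent. The rerouting argument is not a patching lemma that perturbs the tour to hit portals; no perturbation is ever performed. It is a structural statement about any \emph{optimal} tour: if two edges cross a thin strip $R$ of half-width $\rho<\alpha/2$ around $tt'$ and hit the bounding hypercycle arcs at nearby points, one can swap them to obtain a strictly shorter tour, a contradiction. This forces consecutive crossings to be $\Omega(\rho)$ apart along arcs of total length $O(\log n)$, giving $O((\log n)/\rho)$ crossing edges; separately, at most $O((\log n)/(\alpha-2\rho))$ tour edges can have an endpoint inside $R$ because $P$ is $\alpha$-spaced. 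Choosing $\rho=\Theta(\min(\alpha,1))$ makes both counts $O((\log n)\cdot\max(1,1/\alpha))$. The branching at each level is therefore $n^{O((\log n)\cdot\max(1,1/\alpha))}$, and multiplying over $O(\log n)$ levels gives the stated running time. Your computation, with $m^{O(\log m)}$ branching per level, yields only $n^{O(\log^2 n)}$ with no $\alpha$-dependence at all, and the bounding-disk step you propose has no mechanism to reinsert it.
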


Note that for $\alpha\geq 1$, this is a quasi-polynomial algorithm. In Section~\ref{sec:app_hyptsplower} we show that for very dense inputs, it is unlikely that our running time can be improved significantly: we prove that there is no $2^{o(\sqrt{n})}$ algorithm for point sets of spacing $\Theta(1/\sqrt{n})$, unless the Exponential Time Hypothesis (ETH) fails.

\paragraph*{Adapting algorithms from the Euclidean plane.}

Most algorithms for \ETSP are difficult to adapt to the hyperbolic setting. The majority of known subexponential algorithms for \ETSP (see~\cite{Kann92,SmithW98,BergBKK18}) are based on some version of the so-called \emph{Packing Property}~\cite{BergBKK18}. The property roughly states that for any disk $\delta$ of radius $r$ and any optimal tour $\tau$, the number of segments in $\tau$ of length at least $r$ that intersect $\delta$ is at most some absolute constant. This starting point is not available to us, since a direct adaptation of the Packing Property as stated above is false in~$\Hyp^2$. For example, we can create a regular $n$-gon where the length of each side is $c\log n$ for some constant $c$, and the inscribed circle has radius $r<c \log n$. The boundary of the $n$-gon is an optimal tour of its vertices, and the inscribed disk is intersected more than a constant times with tour segments of length at least $r$.

The only exact \ETSP algorithm that directly carries over to $\Hyp^2$ is the algorithm of Hwang, Chang and Lee~\cite{HwangCL93}, as it only relies on the fact that any optimal tour in the plane is crossing-free. Unfortunately, this algorithm has a running time of $n^{O(\sqrt{n})}$, which is far from our goal. Nonetheless, we can use this algorithm for the case when the point set $P$ has close point pairs, that is, when $\alpha\leq \log^2 n/\sqrt{n}$. This is discussed further in Section~\ref{sec:prelim}.

\paragraph*{Our techniques.}
To get a quasi-polynomial algorithm for $\alpha=\Omega(1)$, we need to prove our own separator theorem. The separator itself is fairly simple: it is a line segment of length $O(\log n)$. Due to the special properties of $\Hyp^2$, optimal tours cannot go ``around'' this segment. The difficulty is to show that the line segment is crossed only $O(\log n)$ times by an optimal tour. We show that having a pair of ``nearby''\footnote{The absolute distance of crossing edges cannot be bounded; we use a special definition of ``nearby''.} tour edges crossing a certain  neighborhood $R$ of the segment can be ruled out with a rerouting argument that is reminiscent of the proof of the Packing Property in~$\Reals^2$. This limits the number of segments crossing both $R$ and the segment to $O(\log n)$. All other tour edges crossing the segment must have an endpoint in $R$. Since $R$ is ``narrow'', it can contain at most $O(\log n)$ points from $P$, as $P$ is $\alpha$-spaced. These bounds together limit the number of tour edges crossing our segment to $O(\log n)$.
With the separator at hand, we use a standard divide-and-conquer algorithm to prove Theorem~\ref{thm:main}. For values $\alpha\leq \log^2 n/\sqrt{n}$, we suggest using the algorithm of Hwang~\etal~\cite{HwangCL93}.

\paragraph*{Computational model.}

As our input, we get a list of points $P$ with rational coordinates in the Poincar\'e disk model (which we briefly introduce in Section~\ref{sec:prelim}) and a rational number $x$. The goal is to decide if there is a tour of length at most $x$.

It is a common issue in computational geometry that one needs to be able to compare sums of distances. In geometric variants of TSP, this directly impacts the output, and unfortunately no method is known to tackle this in a satisfactory manner on a word-RAM machine. For this reason, most work in the area assumes that the computation is done on a real-RAM machine that can compute square roots exactly. Perhaps even less is known about comparing sums of distances in hyperbolic space. For this article, we work in a real-RAM that, in addition to taking square roots, is also capable of computing the natural logarithm~$\ln(.)$.

\section{Preliminaries}\label{sec:prelim}

\paragraph*{The hyperbolic plane and the Poincar\'e disk model.} 
Introducing the hyperbolic plane properly is well beyond the scope of this
section, but we list some important properties that we will be using. A detailed
exposition can be found in several textbooks~\cite{benedetti2012lectures,
thurston1997three,greenberg1993euclidean,RamsayRichtmyer}. 

The hyperbolic plane $\Hyp^2$ is a homogeneous metric space with the
key property that the area and circumference of disks grows exponentially with the radius, that is, a disk of radius $r$ has area $4\pi \sinh^2(r/2)$ and circumference $2\pi\sinh(r)$. For $r>1$, both the area and circumference are $\Theta(e^{r})$.
On the other hand, a small neighborhood of any point in the hyperbolic plane
is very similar to a small neighborhood of a point in the
Euclidean plane. More precisely, the disk of radius $\eps$ around a point in
$\Hyp^2$ and $\Reals^2$ have a smooth bijective mapping that preserve
distances up to a multiplicative factor of $1+f(\eps)$, where $\lim_{\eps
\rightarrow 0} f(\eps)=0$.

The hyperbolic plane itself can be defined in many ways, but it is most convenient to take
some region of $\Reals^2$, and equip it with a custom metric. Such
definitions are also called \emph{models} of the hyperbolic plane. In this article, we use the Poincar\'e disk model for all of the figures.

The Poincar\'e disk model is the open unit disk of $\Reals^2$ equipped with the distance function
\[\dist(u,v)=\cosh^{-1}\left(1+2\frac{\norm{u-v}^2}{(1-\norm{u}^2)(1-\norm{v}^2)}\right),\]
where $\norm{.}$ is the Euclidean norm.\footnote{As $\cosh^{-1}(x)=\ln(x+\sqrt{x^2-1})$, the distance of two points in the Poincar\'e disk model with given Euclidean coordinates can be computed on a real-RAM machine which is capable of taking square roots and computing~$\ln(.)$.} The precise function here is irrelevant; we present the formula just as an example of defining a custom metric space.\footnote{If we need to calculate angles, curve length, and area, we should define the metric tensor instead: $ds^2=4\frac{\norm{dx}^2}{(1-\norm{x}^2)^2}$~\cite{CannonFKP97}.} We list some further properties of $\Hyp^2$ used in the article.

\begin{figure}[t]
\centering
\includegraphics[height=5.5cm]{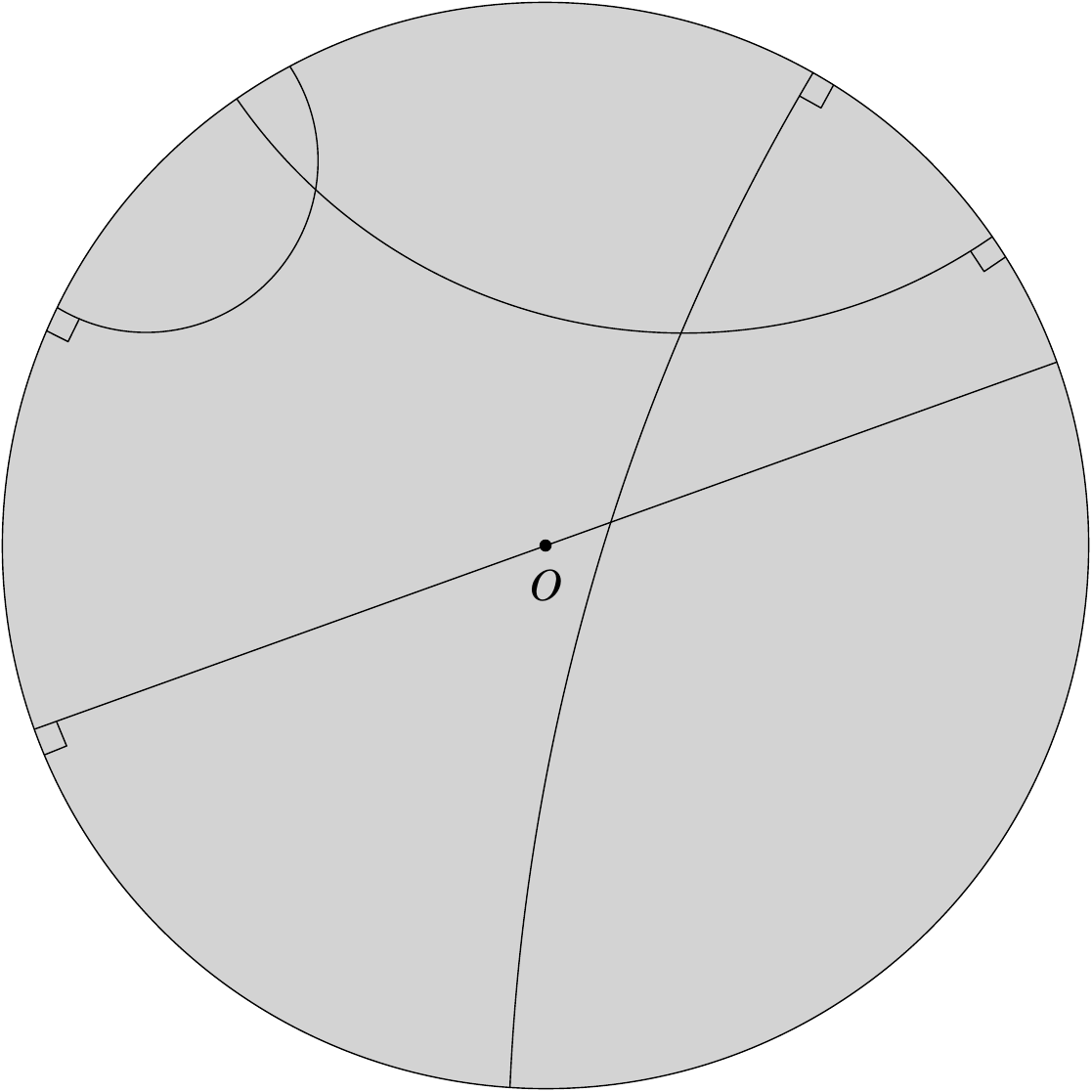}
\hspace{1.5cm}
\includegraphics[height=5.5cm,trim={3cm 3cm 0 0},clip]{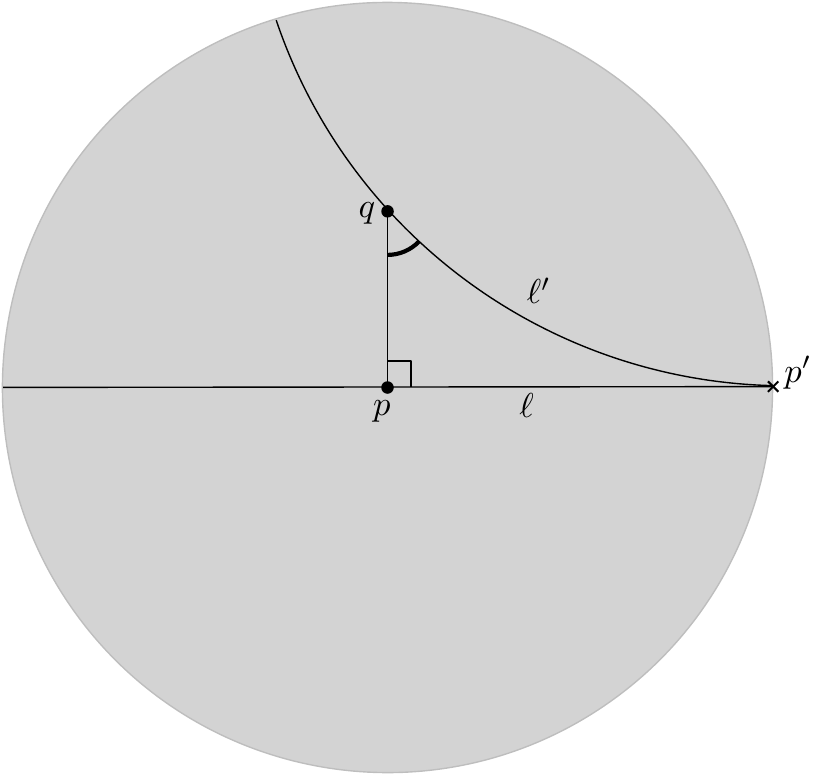}
\caption{Left: lines in the Poincar\'e model. Right: the angle of parallelism for the length $|pq|$.}\label{fig:hyperintro}
\end{figure}

\begin{itemize}
\item \emph{Lines, angles, and ideal points.}\\
In the Poincar\'e  disk model hyperbolic
lines appear as Euclidean circular arcs that are perpendicular to the unit
circle, as illustrated on the left of Figure~\ref{fig:hyperintro}. In particular, hyperbolic lines through the center of the disk are
diametrical segments of the unit disk. The model is \emph{conformal}, that is,
the angle of a pair of lines in $\Hyp^2$ is the same as the angle of
the corresponding arcs in $\Reals^2$. 
The points on the boundary of the disk are called \emph{ideal points}.
\item \emph{Angle of parallelism.}\\ Let $p,q\in \Hyp^2$ and let $\ell$ be the line through $p$ that is perpendicular to $pq$, and let $p'$ be an ideal point of $\ell$, see the right hand side of Figure~\ref{fig:hyperintro}. Let $\ell'$ be the line through $q$ and $p'$. Note that $\ell$ and $\ell'$ are disjoint lines in the open disk; they are called \emph{limiting parallels}. The angle $\an pqp'$ is called the \emph{angle of parallelism}, which only depends on the length of the segment $pq$ in the following way~\cite{RamsayRichtmyer}.
\begin{equation}
\tan(\an pqp')=\frac{1}{\sinh(|pq|)}
\end{equation}
\item \emph{Hypercycles or equidistant curves.}\\
The set of points at a given distance $\rho$ from a line $\ell$ is not a line, but it forms a \emph{hypercycle} in $\Hyp^2$. A hypercycle has two arcs, one on each side of $\ell$. In the Poincar\'e model, a hypercycle for a line $\ell$ consists of two circular arcs, ending at the same ideal points as $\ell$.
\item \emph{Optimal tours in $\Hyp^2$ and crossings.}\\ An optimal traveling salesman tour will consist of geodesics between pairs of input points, i.e., hyperbolic segments, just as in $\Reals^2$. Moreover, the triangle inequality implies that any self-crossing tour (where two segments $pp'$ and $qq'$ cross) can be shortened. Thus, optimal tours in $\Hyp^2$ are non-crossing.
\end{itemize}

\paragraph*{Getting a subexponential algorithm for all values of $\alpha$.}
We can give the following more general formulation of the result of~\cite{HwangCL93}.

\begin{theorem}[Hwang, Chang and Lee~\cite{HwangCL93}, stated generally]\label{thm:HwangGen}
Let $P$ be a set of $n$ points in $\Reals^2$, and let $w:\binom{P}{2}\rightarrow \Reals$ be a weight function on the (straight) segments defined by the point pairs. Suppose that the optimal TSP tour of $P$ with respect to $w$ is crossing-free. Then there is an algorithm to compute this optimal tour in $n^{O(\sqrt{n})}$ time.
\end{theorem}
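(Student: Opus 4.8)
The plan is to adapt the ``searching over separators'' strategy of Hwang, Chang and Lee~\cite{HwangCL93}, which uses nothing about the weights beyond the promised fact that the optimum tour $T^\star$ is crossing-free. Since $T^\star$ is a Hamiltonian cycle drawn with straight segments and no self-crossing, it bounds a simple polygon on the $n$ input points. Fix a large triangle $\Delta$ enclosing $P$, triangulate both the interior and the exterior of $T^\star$ within $\Delta$, and let $\mathcal{D}$ be the resulting planar triangulation; by construction it contains every edge of $T^\star$. By the cycle-separator theorem for planar triangulations, $\mathcal{D}$ contains a simple cycle $C$ of length $O(\sqrt{n})$ such that at most $\tfrac23 n$ input points lie strictly inside $C$ and at most $\tfrac23 n$ strictly outside it. Crucially, since $T^\star \subseteq \mathcal{D}$ the curves $T^\star$ and $C$ have no proper crossing, so the portion of $T^\star$ inside the closed disk bounded by $C$ is a union of vertex-disjoint subpaths whose endpoints lie among the $O(\sqrt{n})$ vertices of $C$: the tour enters and leaves this region only at vertices of $C$, and symmetrically on the outside.

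This structure drives a recursive algorithm. A \emph{subproblem} is specified by a region bounded by a few disjoint closed curves through input points (equivalently, by the cyclically ordered vertex sets of those curves), a set $Q$ of ``free'' points inside it, and \emph{interface data} saying, for each boundary vertex, how many of its two tour-edges point into the region, together with a crossing-free chord diagram matching up the resulting stubs. The task is to compute a minimum-weight collection of vertex-disjoint crossing-free paths that covers $Q$, respects the prescribed stub pattern on the boundary, and is drawn inside the region. The top-level call takes all of $P$, no boundary and empty interface, and its optimum is exactly a shortest tour. To solve a subproblem with $m$ free-plus-boundary points, we \emph{enumerate} a separator: $O(\sqrt m)$ points of $Q$ forming a would-be cycle, together with the combinatorial pattern in which an optimal partial solution meets it. Running the triangulate-and-separate argument above on an (unknown, but existing) optimal partial solution shows that some enumerated choice is correct; each such choice cuts the region into two subregions with at most $\tfrac34 m$ free-plus-boundary points each (balancing the old boundary vertices as well, so that the $O(\sqrt m)$ new vertices are only a lower-order additive term) and a well-defined interface of size $O(\sqrt m)$, on which we recurse. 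Base cases with $m = O(1)$ are solved by brute force.

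Correctness follows because for \emph{any} crossing-free realization of a subproblem's interface -- in particular for the restriction of $T^\star$ -- the triangulate-and-separate step yields a separator that is one of the enumerated candidates, so the true continuation is never missed and the value returned at the root equals $\weight(T^\star)$. For the running time, a subproblem on $m$ points has at most $\binom{m}{O(\sqrt m)}\cdot 2^{O(\sqrt m)}\cdot (O(\sqrt m))! = m^{O(\sqrt m)}$ candidate separators-with-pattern, and each produces two subproblems of size at most $\tfrac34 m$; hence the recursion tree has depth $O(\log n)$ and $\poly(n)$ leaves, and along every root-to-leaf path the sizes decay geometrically, so the product of the per-node factors is $m_0^{O(\sqrt{m_0})}\cdot m_1^{O(\sqrt{m_1})}\cdots = n^{O(\sqrt n)}$. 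Multiplying by the $\poly(n)$ leaves and the $\poly(n)$ bookkeeping per node leaves the bound at $n^{O(\sqrt n)}$. The weight function $w$ enters only as a number to be minimized; no metric property of $w$ is used.

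The delicate part is the design of the interface data, and this is where I expect the real work to lie. We must ensure that (i) once the separator and its crossing pattern are fixed, the two resulting subproblems are genuinely \emph{independent}, which means the interface has to record just enough of the planar embedding -- which side of the separator each stub lies on, and the rotation order of stubs at shared boundary vertices -- while staying within the $m^{O(\sqrt m)}$ enumeration budget; (ii) the partial-tour path systems appearing deeper in the recursion still admit separators of size $O(\sqrt m)$, which we obtain by re-triangulating each current subregion so that it contains all of the current partial-tour edges and then invoking the cycle-separator theorem inside it; and (iii) the vertices lying on a separating cycle, and the tour-edges lying on it, are apportioned consistently between the two sides, so that the two sub-solutions reassemble into precisely the globally prescribed path system (and, at the root, into a single cycle). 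These points are technical but orthogonal to the weights, so once they are settled the statement holds verbatim for an arbitrary $w$ whose optimum is crossing-free, as claimed in Theorem~\ref{thm:HwangGen}.
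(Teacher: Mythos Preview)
The paper does not actually prove Theorem~\ref{thm:HwangGen}: it is quoted from Hwang, Chang and Lee~\cite{HwangCL93} with the parenthetical ``stated generally'', and the only content the paper adds is the observation (implicit in the theorem's phrasing and explicit in the surrounding discussion) that the algorithm of~\cite{HwangCL93} never uses the Euclidean metric except through the fact that an optimal tour is crossing-free. There is therefore no proof in the paper to compare against.

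That said, your sketch is a faithful reconstruction of the Hwang--Chang--Lee strategy and correctly isolates the one place where the weights enter: only as numbers to be summed and compared, never geometrically. Triangulating so that the unknown crossing-free optimum is a subgraph, applying a cycle separator of size $O(\sqrt{m})$, enumerating all candidate separators together with how the tour threads through them, and recursing on the two balanced sides---this is exactly the mechanism of~\cite{HwangCL93}, and your remark that the interface bookkeeping is ``orthogonal to the weights'' is precisely the justification for the generalized statement.

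One small wrinkle in your running-time paragraph: the recursion tree does \emph{not} have $\poly(n)$ leaves, since at every internal node you branch over $m^{O(\sqrt m)}$ guesses. The correct accounting is the recurrence $T(m)\le m^{O(\sqrt m)}\cdot T(\tfrac{3}{4}m)$, which unrolls to $n^{O(\sqrt n + \sqrt{3n/4} + \cdots)}=n^{O(\sqrt n)}$ because the exponents form a geometrically decaying series. Your final bound is right; only the ``$\poly(n)$ leaves'' phrasing is off.
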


We convert our initial point set $P$ in the Poincar\'e model to the Beltrami-Klein model of~$\Hyp^2$ to get a point set $P_{BK}$. In the Beltrami-Klein model, Euclidean segments inside the open unit disk are (geodesic) segments of~$\Hyp^2$. Since the optimal hyperbolic TSP tour is crossing-free, the tour in the Beltrami-Klein model is a polygon with vertex set $P_{BK}$. The hyperbolic distances can be used as weights on all segments with endpoints from $P_{BK}$, and we can apply Theorem~\ref{thm:HwangGen} to get an $n^{O(\sqrt{n})}$ algorithm regardless of the value of $\alpha$.

\section{A separator for \texorpdfstring{Hyperbolic TSP}{\HTSP}}

\paragraph*{Centerpoint and a separating line.} It has already been observed in~\cite{hyperbolic_inters} that for any set $P\subset \Hyp^2$
of $n$ points there exists a point $q\in \Hyp^2$ such that for any line $\ell$ through $q$ the two open half-planes with boundary $\ell$ both contain at most $\frac{2}{3}n$ points from $P$, that is, the line $\ell$ is a \emph{$2/3$-balanced separator} of $P$. Such a point $q$ is called a \emph{centerpoint} of $P$. It has been observed in~\cite{hyperbolic_inters} that given $P$, a centerpoint of $P$ can be computed using a Euclidean centerpoint algorithm, which takes linear time~\cite{DBLP:journals/dcg/JadhavM94}.

It is now easy to prove that we can find a balanced line separator that has a small neighborhood empty of input points. See Figure~\ref{fig:emptycone} for an illustraton.

\begin{lemma}\label{lem:sep}
Given a point set $P\subset \Hyp^2$, there exists a point $q\in \Hyp^2$ and there exists a line $\ell$ through $q$ such that $P$ is disjoint from the open double cone with center $q$, axis $\ell$ and half-angle $\frac{\pi}{2n}$. Any such line $\ell$ is a $2/3$-balanced separator of $P$, and given $P$, a suitable point $q$ and line $\ell$ can be found in linear time.
\end{lemma}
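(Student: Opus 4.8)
The plan is to start from a centerpoint $q$ of $P$, whose existence and linear-time computability are quoted above, and then argue that we can rotate the separating line around $q$ so that its narrow double cone avoids all of $P$. Fix $q$ a centerpoint. For each point $p \in P$ with $p \neq q$, the set of directions $\theta \in [0,\pi)$ (i.e. lines through $q$) for which $p$ lies in the open double cone of half-angle $\frac{\pi}{2n}$ around the line of direction $\theta$ is an open arc of length $\frac{\pi}{n}$ in the space of directions $[0,\pi)$ (a circle of total length $\pi$). There are at most $n$ such points (at most $n-1$ if $q \in P$, and if $q \notin P$ we may still bound by $n$), so the union of these "bad" arcs has total measure at most $n \cdot \frac{\pi}{n} = \pi$. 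Since the arcs are open, a counting/measure argument shows their union cannot cover the whole circle of directions: either the total measure is strictly less than $\pi$ (when $q \in P$, giving measure $\leq (n-1)\frac{\pi}{n} < \pi$), or, if $q \notin P$ and the arcs have total measure exactly $\pi$, a finite union of $n$ open arcs each of length $\frac{\pi}{n}$ still cannot cover the circle unless they are pairwise disjoint and tile it exactly, which a slight perturbation argument (or simply noting that $n$ open intervals covering a circle must overlap, hence have total measure $> \pi$) rules out. Hence there exists a direction $\theta^\star$ avoided by all bad arcs; let $\ell$ be the line through $q$ in direction $\theta^\star$. By construction $P$ is disjoint from the open double cone with center $q$, axis $\ell$, half-angle $\frac{\pi}{2n}$.

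Next I would verify that any line $\ell$ through $q$ is automatically a $2/3$-balanced separator: this is immediate from the defining property of a centerpoint, which says precisely that every line through $q$ leaves at most $\frac{2}{3}n$ points of $P$ in each open half-plane. So no extra work is needed there; in particular the specific $\ell$ we selected is balanced.

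For the algorithmic claim, I would compute the centerpoint $q$ in linear time via the quoted Euclidean centerpoint algorithm (applied in whichever model we use, since "line through $q$" and "half-plane" are model-independent incidence notions and the Beltrami--Klein model makes lines Euclidean). Then, for each $p \in P \setminus \{q\}$, compute the forbidden open arc of directions in $O(1)$ time — this is where one has to be slightly careful about the hyperbolic geometry: the set of directions for which $p$ falls in the double cone of half-angle $\frac{\pi}{2n}$ about a line through $q$ is governed by the hyperbolic angle $\angle$ at $q$ in the triangle with the relevant apex, but since we only need the angular interval $(\phi_p - \frac{\pi}{2n}, \phi_p + \frac{\pi}{2n})$ where $\phi_p$ is the direction from $q$ to $p$, this is a constant-time computation per point using the model's distance/angle formulas. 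Sorting the $2n$ arc endpoints and sweeping once finds an uncovered direction in $O(n \log n)$ time, or $O(n)$ if one is content with a median-type selection; in any case this dominates nothing and the total is linear (or near-linear, which we can state as linear after a standard selection trick, matching the centerpoint cost).

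The main obstacle is the boundary case of the counting argument: when $q \notin P$ and we have exactly $n$ bad arcs of length exactly $\frac{\pi}{n}$ each, their total length equals the total length $\pi$ of the direction circle, so a naive measure bound does not immediately give an uncovered point. The resolution is the elementary topological fact that a circle of circumference $\pi$ cannot be covered by $n$ \emph{open} arcs of length $\frac{\pi}{n}$ — any cover by $n$ open arcs has total length strictly greater than the circumference (consecutive arcs in a minimal subcover must overlap), so our $n$ arcs leave a point uncovered. Alternatively, one can sidestep this entirely by using half-angle $\frac{\pi}{2n} - \epsilon$ for an infinitesimal $\epsilon$, or by observing that we may take $q$ to be one of the input points when convenient; I would include the clean open-cover argument since it keeps the stated half-angle $\frac{\pi}{2n}$ exact.
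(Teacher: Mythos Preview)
Your argument is correct and rests on the same idea as the paper's---look at the angular positions of the points of $P$ around a centerpoint $q$---but the paper executes it via the direct pigeonhole dual: draw the $n$ lines $\ell_p$ through $q$ and each $p\in P$; these cut the direction circle $[0,\pi)$ into at most $n$ arcs of total length $\pi$, so some gap has width at least $\pi/n$, and its angular bisector is a line $\ell$ at angular distance $\geq \pi/(2n)$ from every $\ell_p$. This avoids your boundary case entirely: there is no need to argue that $n$ open arcs of length $\pi/n$ cannot cover a circle of length $\pi$, because the existence of a sufficiently wide gap is immediate from pigeonhole. Your covering formulation is valid and the open-cover argument you sketch does go through, but it adds a case analysis the paper dispenses with in one line. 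On the algorithmic side, the paper simply asserts linear time for locating the gap; your sort-and-sweep gives $O(n\log n)$, and a bucketing trick (partition $[0,\pi)$ into $n$ equal buckets and either find an empty bucket or, if each bucket holds exactly one point, scan consecutive pairs) recovers $O(n)$ if one wants to match the stated bound exactly.
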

 
\begin{proof}
Let $q$ be a centerpoint of $P$. For each point $p\in P$, let $\ell_p$ be the line through $q$ and $p$. Since we have defined $n$ lines through $q$, there is a pair of consecutive lines $\ell_p,\ell_{p'}$ whose acute angle is at least $\pi/n$. Let $\ell$ be the angle bisector of $\ell_p$ and $\ell_{p'}$. Then $\ell$ clearly has the desired properties, and the centerpoint $q$, the lines $\ell_p,\ell_{p'}$ and the line $\ell$ can all be computed in linear time.
\end{proof}

We can extend Lemma~\ref{lem:sep} to get balance with respect to a subset $B\subseteq P$, that is, both half-planes bounded by $\ell$ would contain at most $\frac{2}{3}|B|$ points of $B$. One only needs to set $q$ to be the centerpoint of $B$ instead of $P$.

\begin{figure}[t]
\centering
\includegraphics[width=8cm]{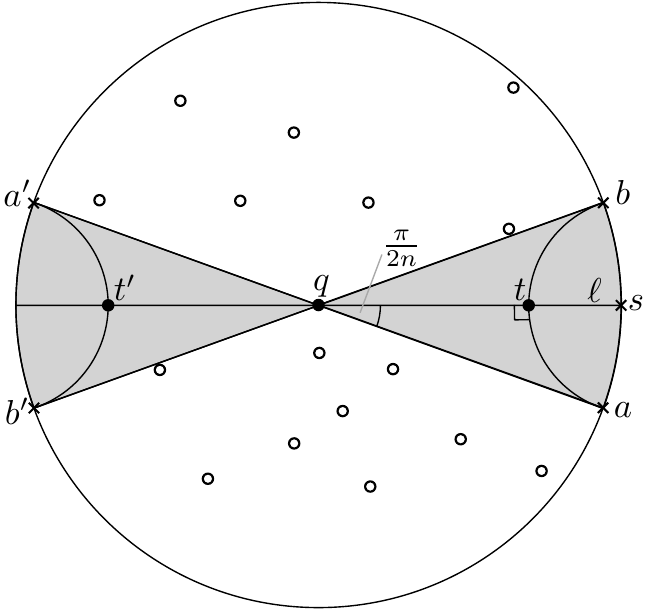}
\caption{The empty double cone with axis $\ell$.}\label{fig:emptycone}
\end{figure}

\paragraph*{Defining a region around the separator.}
From this point onwards, $q$ denotes a centerpoint of $P$, and $\ell$ is a line through $q$ with the properties from Lemma~\ref{lem:sep}. Let $C$ denote the double cone of center $q$, axis $\ell$ and half-angle $\frac{\pi}{2n}$, see Figure~\ref{fig:emptycone}. Note that by Lemma~\ref{lem:sep}, we have that $C\cap P=\emptyset$. Let $s$ be an ideal point of $\ell$, and let $a,b$ be ideal points on the boundary of $C$, such that $\an aqs = \an sqb = \frac{\pi}{2n}$. Let $t=ab \cap \ell$. Notice that $qta$ is a right-angle triangle with ideal point $a$, and it has angle $\frac{\pi}{2n}$ at $q$. Therefore, $\frac{\pi}{2n}$ is the angle of parallelism for the distance $|qt|$, and it satisfies
\begin{equation}\label{eq:qt_len}
\sinh(|qt|)=\frac{1}{\tan(\frac{\pi}{2n})}.
\end{equation}

The line $ab$ splits $\Hyp^2$ into two open half-planes: the side $H_q$ containing $q$, and the side $H_s$ that has $s$ on its boundary. Note that $H_s \subset C$, therefore $P\subset H_q$. Consequently, all segments of the tour are contained in $H_q$.
We mirror $a,b$ and $t$ to the point $q$; let $a',b'$ and $t'$ denote the resulting points respectively. By our earlier observation, the entire tour is contained in the geodesically convex region between the lines $ab$ and $a'b'$, and any tour segment intersecting $\ell$ will intersect it somewhere on the segment $tt'$.

\begin{figure}[t]
\centering
\includegraphics[width=10cm]{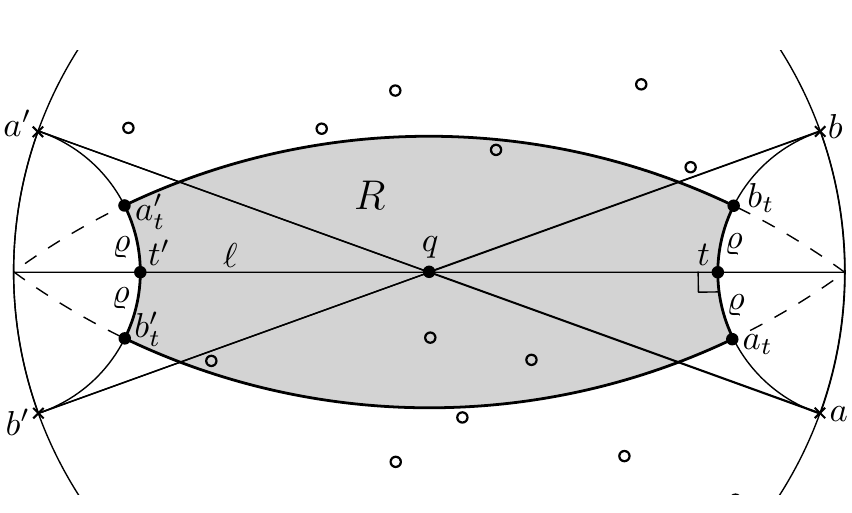}
\caption{The construction of the region $R$.}\label{fig:sep_region}
\end{figure}

Let $a_t$ and $b_t$ be the points on $ab$ at distance $\rho$ from $t$, where $\rho\in (0,\alpha/2)$ is a suitable number that will be defined later. Let $a'_t$ and $b'_t$ denote the analogous points on $a'b'$, see Figure~\ref{fig:sep_region}. 
Let $R$ denote the region of the hyperbolic plane consisting of all points between $ab$ and $a'b'$ whose distance from $tt'$ is at most $\rho$. The resulting shape $R$ is geodesically convex; its boundary consists of two segments ($a_tb_t$ and $a'_tb'_t$), and two hypercycle arcs, denoted by $\arc{a_t b'_t}$ and $\arc{b_t a'_t}$. In general, for two points $u,v$ on one of these hypercycle arcs, let $\arc{uv}$ denote the arc between them, and let $|\arc{uv}|$ be the length of this arc.

Note that any tour segment that connects points on two different sides of $\ell$ also intersects $R$. A tour segment that intersects $R$ can have $0,1$ or $2$ endpoints in $R$. A segment with exactly $1$ endpoint in $R$ is called \emph{entering}. As $R$ is geodesically convex, segments with both endpoints in $R$ are entirely contained in $R$.  All other tour segments crossing $\ell$ must intersect at least one of $\arc{a_t b'_t}$ and $\arc{b_t a'_t}$. We say that a segment \emph{crosses} $R$ if it intersects both $\arc{a_t b'_t}$ and $\arc{b_t a'_t}$. (It is possible that a segment whose endpoints lie outside $R$ on the same side of $\ell$ intersect one of these arcs twice. These segments are not relevant for our algorithm.)

The rest of this section focuses on the following main lemma.

\begin{lemma}\label{lem:R_is_good}
The region $R$ has the following properties:
\begin{itemize}
\item[(i)] $|R\cap P| < n_{in} \eqdef 1+\frac{2(\ln n +1)}{\alpha-2\rho}$
\item[(ii)] There are less than $s_{cr}\eqdef 2+\frac{2(\ln n+1)\cosh \rho}{\rho}$ tour segments that cross $R$.
\end{itemize}
\end{lemma}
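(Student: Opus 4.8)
The plan is to prove (i) by a packing argument based on the $\alpha$-spacing, and (ii) by a $2$-opt rerouting argument that exploits optimality of the tour.

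\emph{Part (i).} I would project onto the separator. Every point of $R$ lies within distance $\rho$ of the segment $tt'$, and since $R$ sits between the lines $ab,a'b'$, which are perpendicular to $\ell$ at $t$ and $t'$, the point of $\ell$ closest to any $p\in R$ always lies on $tt'$; call it $\pi(p)$. For distinct $p,p'\in R\cap P$ we have $\dist(p,p')\geq\alpha$ by $\alpha$-spacing and $\dist(p,\pi(p)),\dist(p',\pi(p'))\leq\rho$, so the triangle inequality gives $\dist(\pi(p),\pi(p'))\geq\alpha-2\rho>0$. Hence $\pi$ embeds $R\cap P$ into $tt'$ as a set of points pairwise at distance $\geq\alpha-2\rho$, so $(|R\cap P|-1)(\alpha-2\rho)\leq|tt'|$. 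Finally $|tt'|=2|qt|$, and \eqref{eq:qt_len} gives $\sinh|qt|=\cot(\pi/2n)$, i.e.\ $|qt|=\ln\cot(\pi/4n)\leq\ln(4n/\pi)<\ln n+1$ (using $\tan x\geq x$); thus $|tt'|<2(\ln n+1)$ and $|R\cap P|<1+\frac{2(\ln n+1)}{\alpha-2\rho}=n_{in}$.

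\emph{Part (ii).} Fix an optimal (hence non-crossing) tour $\tau$, and first set up the combinatorial structure of the crossing segments. Each such segment is a geodesic meeting the geodesically convex region $R$ in a single chord, which (by the definition of ``crosses $R$'') joins a point of $\arc{a_tb'_t}$ to a point of $\arc{b_ta'_t}$ and has its two endpoints outside $R$, on opposite sides of $\ell$. As $\tau$ is non-crossing these chords are pairwise disjoint, and since they all run between the two hypercycle arcs they inherit a linear order, consistent along both arcs. List the crossing segments as $s_1,\dots,s_k$ in this order, $s_m=p_mr_m$, with $x_m\in\arc{a_tb'_t}$ and $y_m\in\arc{b_ta'_t}$ the two ends of the chord $s_m\cap R$, and name $p_m$ the endpoint on the side of $\ell$ carrying $\arc{a_tb'_t}$. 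Two ingredients then drive the bound. (a) Since the two hypercycle arcs lie at distance $\rho$ from $\ell$ on opposite sides, every chord $s_m\cap R$ crosses $\ell$ and so has length $|x_my_m|\geq 2\rho$. (b) Call two crossing segments \emph{nearby} if their ends on $\arc{a_tb'_t}$ are within arc-length $2\rho$ of each other \emph{and} their ends on $\arc{b_ta'_t}$ are too. I would prove that two consecutive crossing segments $s_m,s_{m+1}$ are never nearby: if they were, delete $s_m,s_{m+1}$ from $\tau$ and reconnect the two resulting paths with the two same-side edges $p_mp_{m+1}$ and $r_mr_{m+1}$; writing $T=|p_mx_m|+|p_{m+1}x_{m+1}|+|r_my_m|+|r_{m+1}y_{m+1}|$, the new edges have total length $\leq T+|x_mx_{m+1}|+|y_my_{m+1}|<T+4\rho$, whereas by (a) the deleted edges have total length $|p_mr_m|+|p_{m+1}r_{m+1}|=T+|x_my_m|+|x_{m+1}y_{m+1}|\geq T+4\rho$, so the new tour is strictly shorter, contradicting optimality. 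Consequently, for every $m$ at least one of the arc-gaps $|\arc{x_mx_{m+1}}|,\,|\arc{y_my_{m+1}}|$ is $\geq 2\rho$; summing over $m=1,\dots,k-1$ and using that the gaps along each arc sum to at most $|\arc{a_tb'_t}|=|\arc{b_ta'_t}|=|tt'|\cosh\rho<2(\ln n+1)\cosh\rho$ yields $2\rho(k-1)<4(\ln n+1)\cosh\rho$, hence $k<1+\frac{2(\ln n+1)\cosh\rho}{\rho}<s_{cr}$.

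\emph{The main obstacle.} Everything above is routine except for one step in (ii): showing that the reconnection using the \emph{same-side} edges $p_mp_{m+1},r_mr_{m+1}$ (and not the ``crossing'' reconnection $p_mr_{m+1},p_{m+1}r_m$, which would itself be long and give no improvement) is the one that yields a single Hamiltonian cycle. This is a combinatorial statement about the cyclic order of $p_m,r_m,p_{m+1},r_{m+1}$ along $\tau$, and I expect it to follow from the non-crossing property of $\tau$ together with the fact that all $p$'s lie on one side of $\ell$ and all $r$'s on the other — exactly the kind of case analysis appearing in the proof of the Packing Property in $\Reals^2$ — possibly at the cost of reasoning about a constant number of mutually nearby crossing segments at once rather than a single consecutive pair. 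The genuinely hyperbolic inputs are mild and standard (a chord of the $\rho$-neighborhood of $\ell$ that spans from one side to the other has length $\geq 2\rho$; a hypercycle arc at distance $\rho$ from $\ell$ over a base of length $L$ has length $L\cosh\rho$); the one pitfall to avoid is the Euclidean heuristic that two long, nearly-parallel tour segments stay close along their whole length — this fails in $\Hyp^2$, which is precisely why the Packing Property itself fails there, so the argument must draw all its force from $R$ being \emph{narrow}, of width only $2\rho$ across $\ell$.
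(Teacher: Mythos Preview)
Your Part~(i) is correct and is essentially the paper's argument: project $R\cap P$ orthogonally onto $tt'$, use $\alpha$-spacing and the triangle inequality to get gaps of at least $\alpha-2\rho$ along $tt'$, and bound $|tt'|<2(\ln n+1)$ via the angle-of-parallelism formula.

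Your Part~(ii) has a real gap, and it is exactly the obstacle you flag. Two arc-consecutive crossing segments need \emph{not} admit the same-side reconnection as a single Hamiltonian cycle. For a simple closed curve the intersections with $\ell$ alternate direction along $tt'$; hence in the cleanest situation---when no entering edge crosses $\ell$ between $s_m$ and $s_{m+1}$ on $tt'$---the two crossings are traversed in \emph{opposite} directions along the tour, deletion of $\{s_m,s_{m+1}\}$ leaves a path $r_m\!\leadsto\!r_{m+1}$ and a path $p_{m+1}\!\leadsto\!p_m$, and your same-side edges $p_mp_{m+1},r_mr_{m+1}$ close each path into its own cycle. The other reconnection $\{p_mr_{m+1},p_{m+1}r_m\}$ does yield a Hamiltonian cycle, but those edges themselves cross $R$ and are at least $2\rho$ long, so no saving results. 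Neither of your suggested patches (non-crossing alone, or a constant block of nearby crossings) resolves this without an additional idea.

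The paper avoids the obstacle with two moves. First (Lemma~\ref{lem:reroute}), it reroutes through the \emph{intersection points} on the arcs rather than performing a vertex 2-opt: writing the tour as $p_1p_2\ldots p_n$ with $p_1p_2$ and $p_ip_{i+1}$ the two crossings and $p_1,p_i$ on the same side of $\ell$, the closed walk
\[
p_1\,p'_1\,p'_i\,p_i\,p_{i-1}\ldots p_2\,p'_2\,p'_{i+1}\,p_{i+1}\ldots p_n
\]
always visits all of $P$, and differs in length from the tour by $-|p'_1p'_2|-|p'_ip'_{i+1}|+|p'_1p'_i|+|p'_2p'_{i+1}|$; optimality then gives $|\arc{p'_1p'_i}|+|\arc{p'_2p'_{i+1}}|\geq 4\rho$ with no cycle-versus-two-cycles case analysis. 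Second, the hypothesis ``$p_1,p_i$ on the same side'' is precisely the statement that the two crossings are traversed in the same direction along the tour; the paper therefore \emph{splits the crossings by direction}, bounds each class by $1+\tfrac{(\ln n+1)\cosh\rho}{\rho}$ via the telescoping arc-length sum, and adds. This is where the leading $2$ in $s_{cr}$ comes from, and why your tighter constant $1$ is not what a corrected argument actually delivers. (Equivalently: your same-side 2-opt does produce a Hamiltonian cycle exactly when the two crossings are same-direction, so the fix to your argument is the same direction split---after which you recover the paper's bound, not a sharper one.)
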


The proof requires that we explore the geometry of $R$ more thoroughly.
 
\begin{lemma}\label{lem:Rprop}
We have $|qt|<\ln n + 1$, and $|\arc{a_t b'_t}|=|\arc{b_t a'_t}|<2(\ln n+1) \cosh \rho$. 
\end{lemma}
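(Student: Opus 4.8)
The first bound, $|qt| < \ln n + 1$, follows directly from the angle-of-parallelism relation~\eqref{eq:qt_len}, which states $\sinh(|qt|) = 1/\tan(\pi/(2n))$. The plan is to use the elementary inequalities $\tan x > x$ for $x \in (0,\pi/2)$ and $\sinh y > \tfrac12 e^y - \tfrac12 \cdot 1 \geq$ something of the form $c\,e^{y}$ for $y$ bounded away from $0$ — more concretely, I would combine $\sinh(|qt|) = 1/\tan(\pi/(2n)) < 2n/\pi$ with a lower bound on $\sinh$ of the form $\sinh y \geq \tfrac12(e^y - 1)$, so that $e^{|qt|} - 1 < 4n/\pi$, hence $|qt| < \ln(1 + 4n/\pi) < \ln n + 1$ for all $n \geq 1$ (one checks the last step by noting $1 + 4n/\pi < e \cdot n$ for all $n\geq 1$, since $4/\pi < e$ and the $+1$ only helps for larger $n$; the small cases are immediate). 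The constants are generous, so there is slack to absorb rounding.

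For the second bound, the main idea is that the hypercycle arc $\arc{a_t b'_t}$ sits at distance $\rho$ from the geodesic segment $tt'$, so its length is governed by the length of $tt'$ scaled by the ``width factor'' $\cosh\rho$. Recall that in $\Hyp^2$, if $\gamma$ is a geodesic segment of length $L$ and $\gamma_\rho$ is the corresponding hypercycle arc at distance $\rho$ (the locus of feet-of-perpendicular projections), then $\len(\gamma_\rho) = L \cosh \rho$; this is the standard fact that the equidistant curve at distance $\rho$ from a line is "stretched" by exactly $\cosh\rho$ relative to the line (analogous to, but different from, the circle case where the factor is $\sinh$). The segment $tt'$ has length $|tt'| = 2|qt|$ since $t'$ is the mirror image of $t$ through $q$ and $q$ lies on segment $tt'$. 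Therefore the portion of the hypercycle lying "above" $tt'$ has length exactly $2|qt|\cosh\rho$. Combining with $|qt| < \ln n + 1$ gives $|\arc{a_t b'_t}| = 2|qt|\cosh\rho < 2(\ln n + 1)\cosh\rho$, as desired; the arc $\arc{b_t a'_t}$ is symmetric.

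The step I expect to require the most care is pinning down exactly which hypercycle arc we mean and why its length is precisely $|tt'|\cosh\rho$ rather than the length over a slightly different base segment. The arc $\arc{a_t b'_t}$ runs from $a_t \in ab$ to $b'_t \in a'b'$; I need to argue that the perpendicular projection of this arc onto the line $\ell$ is exactly the segment $tt'$ — i.e., that $a_t$ projects to $t$ and $b'_t$ projects to $t'$. This holds because $ab \perp \ell$ at $t$ (as $qta$ is a right-angle triangle with the right angle at $t$), so the foot of the perpendicular from $a_t$ to $\ell$ is $t$ itself; symmetrically for $b'_t$ and $t'$. Once the endpoints of the base segment are correctly identified as $t$ and $t'$, the length formula $\len = |tt'|\cosh\rho$ is the textbook hypercycle-length identity, and the rest is the arithmetic above. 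I would state the $\cosh\rho$ stretch factor as a cited fact about equidistant curves (available from the metric tensor of the Poincar\'e model, or any hyperbolic geometry text) rather than rederiving it.
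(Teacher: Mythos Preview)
Your proposal is correct and takes essentially the same approach as the paper: both derive $|qt|<\ln n+1$ from \eqref{eq:qt_len} via elementary bounds on $\tan$ and $\sinh$, and both obtain the arc-length bound from the standard hypercycle-length identity (arc at distance~$\rho$ has length equal to base length times $\cosh\rho$) applied to the base segment $tt'$ of length $2|qt|$. The paper's derivation of the first inequality differs only cosmetically---it verifies $\sinh(\ln n + 1) > 1/\tan(\tfrac{\pi}{2n})$ by bounding each side separately by~$n$---while you invert to bound $|qt|$ directly; both are valid.
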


\begin{proof}
%
We first prove our bound on $|qt|$. Note that $\sinh(.)$ is monotone increasing and $\sinh(|qt|)=\frac{1}{\tan(\frac{\pi}{2n})}$ by~\eqref{eq:qt_len}, so it suffices to show that $\sinh(\ln n +1)>\frac{1}{\tan(\frac{\pi}{2n})}$. Indeed, 
\[\sinh(\ln n+1)=\frac{en-\frac{1}{en}}{2}>n\quad\text{ and }\quad\frac{1}{\tan(\frac{\pi}{2n})}< \frac{1}{\frac{3}{2n}}< n.\]

The arc length of the equidistant hypercycle of base $b$ and distance $\rho$ is $b\cosh \rho$ according to~\cite{Smogor}, therefore $|\arc{a_t b'_t}|=|tt'|\cosh(\rho)=2|qt|\cosh(\rho)<2(\ln n+1) \cosh \rho$. 
\end{proof}

\paragraph*{Ruling out dense crossings}
Our next ingredient for the proof is to show that if two segments cross $R$ very close to each other, then they cannot both be in an optimal tour. Figure~\ref{fig:reroute} illustrates the following lemma. 

\begin{lemma}\label{lem:reroute}
Let $p_1p_2\dots p_ip_{i+1}\dots p_{n-1}p_n$ be an optimal tour on $P$ where both $p_1p_2$ and $p_ip_{i+1}$ cross $R$, and where $p_1,p_i$ and $a_t$ lie on the same side of $\ell$. Let $p'_1=p_1p_2\cap \arc{a_t b'_t}$, and define $p'_2,p'_i$ and $p'_{i+1}$ analogously. Then $|\arc{p'_1p'_i}|+|\arc{p'_2p'_{i+1}}|\geq 4\rho$.
\end{lemma}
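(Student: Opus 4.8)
The plan is an exchange (``$2$-opt'') argument, in the spirit of the Euclidean Packing Property. I begin with the combinatorics. Since $p_1p_2$ crosses $R$ it meets both $\arc{a_t b'_t}$ and $\arc{b_t a'_t}$, and these two arcs lie on opposite sides of $\ell$; hence $p_1p_2$ crosses $\ell$, so $p_1,p_2$ lie on opposite sides of $\ell$, and likewise $p_i,p_{i+1}$. Together with the hypothesis that $p_1,p_i$ (and $a_t$) lie on the same side of $\ell$, this forces $p_1p_2$ and $p_ip_{i+1}$ to be distinct and vertex-disjoint tour edges. Deleting them leaves two paths which, besides being reconnected into the original tour, can also be reconnected into a single Hamiltonian cycle on $P$ by the edges $p_1p_i$ and $p_2p_{i+1}$. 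By optimality of the original tour,
\[\dist(p_1,p_i)+\dist(p_2,p_{i+1})\geq\dist(p_1,p_2)+\dist(p_i,p_{i+1}).\]

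Next I locate the four boundary points. Because $p_1p_2$ meets two disjoint arcs of the boundary of the geodesically convex region $R$, it meets $\bd R$ in exactly two points and both of its endpoints lie outside $R$. Since the whole tour lies between the lines $ab$ and $a'b'$, the segment cannot leave $R$ through the straight parts of $\bd R$, so these two points are $p'_1\in\arc{a_t b'_t}$ and $p'_2\in\arc{b_t a'_t}$; and since $p_1$ lies on the side of $\ell$ containing $a_t$, they appear in the order $p_1,p'_1,p'_2,p_2$ along the segment. Hence $\dist(p_1,p_2)=\dist(p_1,p'_1)+\dist(p'_1,p'_2)+\dist(p'_2,p_2)$, and similarly $\dist(p_i,p_{i+1})=\dist(p_i,p'_i)+\dist(p'_i,p'_{i+1})+\dist(p'_{i+1},p_{i+1})$. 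Applying the triangle inequality along $p_1\to p'_1\to p'_i\to p_i$ and along $p_2\to p'_2\to p'_{i+1}\to p_{i+1}$, substituting into the displayed inequality, and cancelling the four common ``stub'' terms leaves
\[\dist(p'_1,p'_i)+\dist(p'_2,p'_{i+1})\geq\dist(p'_1,p'_2)+\dist(p'_i,p'_{i+1}).\]

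To finish, I bound the right-hand side from below. The points $p'_1$ and $p'_2$ lie on the hypercycle at distance $\rho$ from $\ell$, on opposite sides of $\ell$; if $x$ is the point where the segment $p'_1p'_2$ meets $\ell$, then $\dist(p'_1,p'_2)=\dist(p'_1,x)+\dist(x,p'_2)\geq\dist(p'_1,\ell)+\dist(p'_2,\ell)=2\rho$, and likewise $\dist(p'_i,p'_{i+1})\geq 2\rho$. A hypercycle arc is never shorter than the geodesic segment between its endpoints, so $|\arc{p'_1p'_i}|\geq\dist(p'_1,p'_i)$ and $|\arc{p'_2p'_{i+1}}|\geq\dist(p'_2,p'_{i+1})$. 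Chaining these with the last display gives $|\arc{p'_1p'_i}|+|\arc{p'_2p'_{i+1}}|\geq 4\rho$.

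The main thing to be careful about is the geometric bookkeeping of the middle paragraph: confirming that the exchange really produces a Hamiltonian cycle, that $p'_1,p'_2$ are exactly the two crossings of $\bd R$, and that they occur in the stated order. All three facts follow from the geodesic convexity of $R$, the confinement of the tour between $ab$ and $a'b'$, and the earlier observation that a tour segment meeting $\ell$ meets it inside $tt'$; the remaining estimates are elementary.
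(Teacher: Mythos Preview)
Your proof is correct and follows essentially the same exchange argument as the paper. The only organizational difference is that the paper performs the $2$-opt swap directly at the auxiliary points $p'_1,p'_2,p'_i,p'_{i+1}$ (producing a closed walk through all of $P$ and comparing lengths), whereas you swap at the original vertices $p_1,p_2,p_i,p_{i+1}$ and then pass to the primed points via the triangle inequality; both routes land on the same key inequality $|p'_1p'_2|+|p'_ip'_{i+1}|\leq |p'_1p'_i|+|p'_2p'_{i+1}|$ and finish identically.
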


\begin{figure}[t]
\centering
\includegraphics[width=\textwidth]{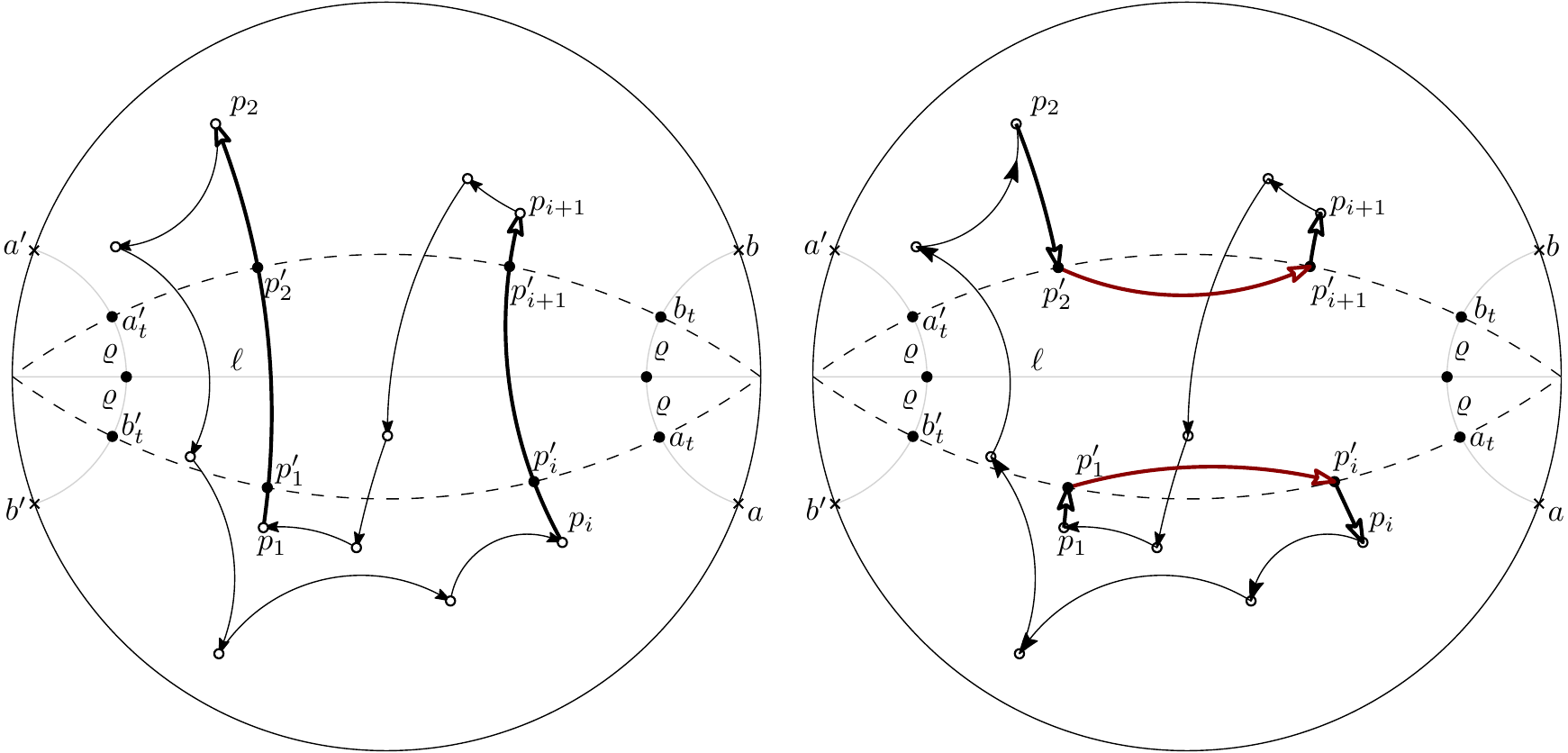}
\caption{Rerouting two crossing edges  ($p_1p_2$ and $p_ip_{i+1}$) into a different tour.}\label{fig:reroute}
\end{figure} 

\begin{proof}
We can create a new tour by removing the segments $p'_1p'_2$ and $p'_ip'_{i+1}$, and replacing them with $p'_1p'_i$ and $p'_2p'_{i+1}$, see Figure~\ref{fig:reroute}. The resulting tour is
\[p_1p'_1p'_ip_ip_{i-1}p_{i-2}\dots p_2p'_2p'_{i+1}p_{i+1}p_{i+2}\dots p_n.\]
Note that this tour contains all the input points.\footnote{This is generally not an optimal tour as it can be further shortened into $p_1p_ip_{i-1}p_{i-2}\dots p_2p_{i+1}p_{i+2}\dots p_n$.} Since the only difference between the tours is that $p'_1p'_2$ and $p'_ip'_{i+1}$ are only present in the optimal tour and $p'_1p'_i$ and $p'_2p'_{i+1}$ are only present in the new tour, by the optimality of $p_1\dots p_n$ we have that
\[0 \geq |p'_1p'_2| + |p'_ip'_{i+1}| - |p'_1p'_i| - |p'_2p'_{i+1}|.\]
Note that $|p'_1p'_2|\geq2\rho$ by the definition of $R$, and analogously $|p'_ip'_{i+1}|\geq2\rho$. Therefore we have
\[
0\geq |p'_1p'_2| + |p'_ip'_{i+1}| - |p'_1p'_i| - |p'_2p'_{i+1}| \geq 4\rho -|\arc{p'_1p'_i}| - |\arc{p'_2p'_{i+1}}|,
\]
which concludes the proof.
\end{proof}

We can now prove Lemma~\ref{lem:R_is_good}.

\begin{proof}[Proof of Lemma~\ref{lem:R_is_good}.]
(i) For a point $p\in P\cap R$, let $p_\ell$ denote the point on $\ell$ for which $pp_\ell$ is perpendicular to $\ell$. Let $p,p'\in P\cap R$ be points such that $p_\ell,p'_\ell$ are consecutive on $\ell$ (i.e., there is no $p''\in P\cap R$ such that $p''_\ell\in p_\ell p'_\ell)$.
By the triangle inequality, $|pp_\ell|+|p_\ell p'_\ell|+|p'_\ell p'|\geq |pp'|$, and $|pp'|\geq \alpha$ since $P$ is $\alpha$-spaced. By the definition of $R$ and $\rho$, we also have that $|pp_\ell|\leq\rho$ and $|p'_\ell p'|\leq \rho$. Consequently, 
\begin{equation}
|p_\ell p'_\ell| \geq \alpha-2\rho.
\end{equation}

We can apply this inequality to all consecutive pairs $p_\ell p'_\ell$. Since all the points $p_\ell$ lie on the segment $tt'$, the total length of the segments $p_\ell p'_\ell$ cannot exceed $|tt'|$. It follows that
\[|P\cap R| \leq 1+\left\lfloor\frac{|tt'|}{\alpha-2\rho}\right\rfloor
< 1+\frac{2(\ln n +1)}{\alpha-2\rho},\]
where the second inequality uses our bound from Lemma~\ref{lem:Rprop}.

(ii) Let $p_1\dots,p_n$ be an optimal tour, and let $p_ip_{i+1}$ be an edge crossing $R$. (Indices are defined modulo $n$.) Note that $p_ip_{i+1}$ can cross $R$ in two directions: either from the side of $a$ to the side of $b$ or the other way around. By Lemma~\ref{lem:reroute}, consecutive crossings $p_ip_{i+1}$ and $p_jp_{j+1}$ in the same direction use at least a total arc length of $4\rho$ on the arcs $\arc{a_t b'_t}$ and $\arc{b_t a'_t}$. Since the total length of these arcs is less than $4(\ln n+1) \cosh \rho$ by Lemma~\ref{lem:Rprop}, the number of crossings in one direction is less than
\[1+\left\lfloor\frac{4(\ln n+1)\cosh \rho}{4\rho}\right\rfloor\leq 1+\frac{(\ln n+1)\cosh \rho}{\rho}.\]
Consequently, the total number of crossings (in both directions) is less than 
\[2+\frac{2(\ln n+1)\cosh \rho}{\rho}.\]
This concludes the proof.
\end{proof}

\section{A divide-and-conquer algorithm}

In order for a divide-and-conquer approach to work for \ETSP, one should be able to solve subproblems with partial tours. We follow the terminology and definitions of De~Berg~\etal~\cite{BergBKK18} here.
Let $M$ be a perfect matching on a set~$B\subseteq P$ of 
so-called \emph{boundary points}. We say that a 
collection~$\cP=\{\pi_1,\ldots,\pi_{|B|/2}\}$ of paths \emph{realizes $M$ on $P$}
if (i) for each pair $(p,q)\in M$ there is a path $\pi_i\in \cP$ with~$p$ and~$q$
as endpoints, 
and (ii) the paths together visit each point $p\in P$ exactly once.
We define the length of a path $\pi_i$ to be the sum of the lengths of its edges, and we define the total length of $\cP$ to be the sum of the lengths of the paths~$\pi_i\in\cP$.
The subproblems that arise in our divide-and-conquer algorithm can be defined
as follows. 

\defproblem{\BTSP}
{A point set $P\subset \Hyp^2$, a set of boundary points $B\subseteq P$, and a perfect matching $M$ on $B$.}
{Find a collection of paths of minimum total length that realizes~$M$ on~$P$.}
Let \emph{PathTSP}$(P,B,M)$ be the optimal tour length for the instance $(P,B,M)$. Note that we can solve \HTSP on a point set $P$ by solving \BTSP $n-1$ times on $P$ with
$B:=\{p,q\}$ and $M := \{(p,q)\}$ for each $q\in P\setminus\{p\}$, and answering
\[\min_{q\in P\setminus \{p\}} \bigg(\text{\emph{PathTSP}}\big(P,\{p,q\},\{(p,q)\}\big)+|pq|\bigg).\]

\subsection{Algorithm}
Our algorithm is a standard divide and conquer algorithm that is very similar to~\cite{SmithW98} and~\cite{BergBKK18}. The algorithm requires knowledge of the initial value of $\alpha$; we can compute this before the first call in $O(n^2)$ time. We give a pseudocode (Algorithm~\ref{alg:hTSP}) and also explain the steps below. In the explanation, we sometimes regard sets of segments with endpoints in $P$ as subgraphs of the complete graph with vertex set $P$.

\begin{algorithm}[t]
\noindent
\caption{\emph{HyperbolicTSP}$(P,B,M,\alpha)$}\label{alg:hTSP}
\textbf{Input:} A set $P\subset \Reals^d$, a subset $B\subseteq P$, a perfect matching $M\subseteq \binom{B}{2}$, and initial spacing $\alpha$   \\
\textbf{Output:} The minimum length of a path cover of $P$ realizing the matching $M$ on $B$  \\[-5mm]
\begin{algorithmic}[1]
\If{$|P| \leq t$} \label{step:threshold}
   \Return \emph{BruteForceTSP}$(P,B,M)$
\EndIf 
\If {$|B|<\max\left(\frac{40\ln |P|}{\alpha}, 8\ln |P|\right)$}\label{step:sep}
	\State Compute a centerpoint $q$ of $P$, the line $\ell$ through $q$ and the region $R$.\label{step:balance_normal}
\Else
	\State Compute a centerpoint $q$ of $B$, the line $\ell$ through $q$ and the region $R$.\label{step:balance_trick}
\EndIf 
\State $\Cr \gets \{pp' \mid p,p'\in P,\, pp' \text{ crosses } R\}$,\, $\End \gets \{pp' \mid p\in R\cap P,\, p'\in P, pp'\text{ intersects }\ell\}$
\State $mincost \gets \infty$
\ForAll{$S_{cr}\subseteq \Cr$, $|S_{cr}|\leq s_{cr}$} \label{step:mainloop}
\ForAll{$S_{end}\subseteq \End$, the maximum degree of $S_{end}$ is at most $2$}\label{step:secondloop}
	\State $P_1,P_2\gets$ uncovered vertices on each side of $\ell$
	\State $B_1,B_2\gets$ boundary vertices of $P_1$  (resp., $P_2$).
	\ForAll{perfect matchings $M_1$ on $B_1$ and $M_2$ on $B_2$}\label{step:matchloop}
		\If{$M_1\cup M_2\cup S_{cr} \cup S_{end}$ realize $M$}\label{step:ifrealize}
			\State $c_1\gets HyperbolicTSP(P_1,B_1,M_1,\alpha)$ \label{step:recurse1}
			\State $c_2\gets HyperbolicTSP(P_2,B_2,M_2,\alpha)$\label{step:recurse2}
			\If{$c_1+c_2+\len(S_{cr} \cup S_{end})<mincost$}
				\State $mincost\gets c_1+c_2+\len(S_{cr} \cup S_{end})$
			\EndIf
		\EndIf
	\EndFor
\EndFor
\EndFor
\State \Return{$mincost$}
\end{algorithmic}
\end{algorithm}

As a first step, we run a brute-force algorithm (comparing all path covers of $P$) if the input point set $P$ has size at most the threshold $t$, where $t$ will be a large constant. On line~\ref{step:sep}, we check the size of the boundary. If it is less than $\max\left(\frac{40\ln |P|}{\alpha}, 8\ln |P|\right)$, then we compute the centerpoint of $P$, the line $\ell$ with the empty cone according to Lemma~\ref{lem:sep}, and the region $R$. Otherwise (similarly to~\cite{BergBKK18}), we need to shrink the boundary, so we use a line $\ell$ through the centerpoint of $B$ instead. Next, we define the segment set $\Cr$ as the set of segments $pp'$ that cross $R$, and $\End$ as the set of segments intersecting $\ell$ that have at least one endpoint in $R$. We initialize the returned value $mincost$ to infinity.

On line~\ref{step:mainloop}, we iterate over all segment sets $S_{cr}\subseteq \Cr$ with $|S_{cr}|\leq s_{cr}$, where $s_{cr}$ is our bound on the number of crossing segments from Lemma~\ref{lem:R_is_good}. The algorithm considers $S_{cr}$ to be the set of segments crossing $R$. Next, we iterate over all the sets $S_{end}\subseteq \End$ where each point of $P$ has at most two incident segments from $S_{end}$. The algorithm considers $S_{end}$ to be the set of segments crossing $\ell$ with at least one endpoint in $R$.

Each point in $B$ needs to have one adjacent segment in the optimum tour $\cP$, and each point in $P\setminus B$ needs two such points. We say that a point $p\in B$ (resp., $p\in P\setminus B$) is \emph{uncovered} if its degree in $S_{cr} \cup S_{end}$ is less than $1$ (resp., $2$). We denote by $P_1$ and $P_2$ the set of uncovered points on each side of $\ell$. A point $p\in P$ is a boundary point if $p\in B$ and $p$ has degree $0$ in $S_{cr} \cup S_{end}$, or $p\in P\setminus B$ and it has degree $1$ in $S_{cr} \cup S_{end}$. We let $B_1$ denote the boundary points in $P_1$. Similarly, $B_2$ is the set of boundary points in $P_2$.

Line~\ref{step:matchloop} proceeds by iterating over all perfect matchings $M_1$ on $B_1$ and $M_2$ on $B_2$. If the graph on $B_1\cup B_2\cup B$ formed by $M_1\cup M_2  \cup S_{cr} \cup S_{end}$ is a set of paths such that contracting all edges that are not incident to $B$ results in $M$, then we say that $M_1\cup M_2 \cup S_{cr} \cup S_{end}$ realizes $M$. If this is the case for a particular choice 
$M_1,M_2$, then on lines~\ref{step:recurse1} and~\ref{step:recurse2} we recurse on both $P_1$ and $P_2$. The resulting path covers together with $S_{cr} \cup S_{end}$ form a path cover realizing $M$: if their length is shorter than $mincost$, then we update $mincost$. After the loops have ended, we return $mincost$.

We can also compute the optimum tour itself with a small modification of the algorithm.

\paragraph*{Correctness}
The same algorithmic strategy has been used several times in the literature~\cite{SmithW98,BergBKK18}, so we only give a brief justification. First, notice that the algorithm only returns costs of feasible solutions, therefore the returned value is at least as large as the optimum. It remains to show that the returned cost is less or equal to the optimum. Given an optimal path cover $\cP$, the set $S_{cr}$ of segments in $\cP$ crossing $R$ has size at most $s_{cr}$ by Lemma~\ref{lem:R_is_good}. The set of segments $S_{end}$ with one endpoint in $R$ has degree at most two at each point of $R\cap P$.  Consequently, both sets will be considered in Line~\ref{step:mainloop} and Line~\ref{step:secondloop}. The segments of $\cP$ not in $S_{cr}\cup S_{end}$ form a path cover of $P_1$ and $P_2$ with boundary set $B_1$ and $B_2$. These path covers realize some perfect matchings $M_1$ and $M_2$ on $B_1$ and $B_2$ respectively. The matchings $M_1$ and $M_2$ will be considered in the loop at line~\ref{step:matchloop}, and since $M_1\cup M_2\cup S_{cr}\cup S_{end}$ realizes $M$, the recursive steps on Lines~\ref{step:recurse1} and~\ref{step:recurse2} will be executed. The optimal path covers of $(P_1,B_1,M_1)$ and $(P_2,B_2,M_2)$ together with $S_{cr}\cup S_{end}$ give a path cover of optimal cost.

\subsection{Analyzing the running time}

Any given line of Algorithm~\ref{alg:hTSP} other than the recursive calls and loops can be executed in $O(n^2)$ time. Let us consider the loops next. The number of segment sets $S_{cr}$ to be considered in line~\ref{step:mainloop} is at most $\binom{|\Cr|}{s_{cr}}=O(n^{2 s_{cr}})$, since $|\Cr|=O(n^2)$. The number of segment sets $S_{end}$ to be considered is at most $O(n^{2|R\cap P|})\leq O(n^{2n_{in}})$. By Lemma~\ref{lem:R_is_good}, the loops in line~\ref{step:mainloop} and line~\ref{step:secondloop} together have at most
\begin{equation}\label{eq:mainloop}
O\left(n^{2\left(1+\frac{2(\ln n +1)}{\alpha-2\rho} + 2+\frac{2(\ln n+1)\cosh \rho}{\rho}\right)}\right)=O\left(n^{4(\ln n +1)\left(\frac{1}{\alpha-2\rho}+\frac{\cosh \rho}{\rho}\right) + O(1)}\right)
\end{equation}
iterations. Instead of trying to minimize this expression by our choice of $\rho$, we settle for something that is easy to handle. Let
\[\rho\eqdef \min\left(\frac{3}{10}\alpha,\frac{12}{10}\right).\]
 The exponent of~\eqref{eq:mainloop} can be bounded the following way.
If $\alpha<4$, then $\rho=\frac{3}{10}\alpha$, and\\
  $\cosh(\rho)<1.82$, so we get

\begin{equation}\label{eq:boundarybd1}
\begin{split}
 && 2(s_{cr}+n_{in})&= 4(\ln n +1)\left(\frac{1}{\alpha-2\rho}+\frac{\cosh \rho}{\rho}\right) + O(1)\\
 && &<4(\ln n +1)\left(\frac{1}{\frac{4}{10}\alpha}+\frac{1.82}{\frac{3}{10}\alpha}\right) + O(1)\\
 && &<4(\ln n )\left(\frac{8.57}{\alpha}\right) + O(1/\alpha)\\
 && &<35\frac{\ln n}{\alpha},
\end{split}
\end{equation}

 where the last step uses that $n$ is large enough, which we can ensure by setting the threshold~$t$  in Line~\ref{step:threshold} large enough. If $\alpha\geq 4$, then $\rho=1.2$:

\begin{equation}\label{eq:boundarybd2}
\begin{split}
 && 2(s_{cr}+n_{in})&= 4(\ln n +1)\left(\frac{1}{\alpha-2\rho}+\frac{\cosh \rho}{\rho}\right) + O(1) \\
 && &<4(\ln n +1)\left(\frac{1}{\alpha-2.4}+\frac{1.82}{1.2}\right) + O(1) \\
 && &<7\ln n.
\end{split}
\end{equation}

Next, we will analyze the loop at line~\ref{step:matchloop}, but this will require a bound on the size of the boundary set $B$. The following lemma handles the cases $\alpha<4$ and $\alpha\geq 4$ together.

\begin{lemma}\label{lem:bdbd}
The size of the boundary set $B$ is at most $\max(\frac{60\ln |P|}{\alpha}, 12\ln |P|)$ at every recursion level of Algorithm~\ref{alg:hTSP}.
\end{lemma}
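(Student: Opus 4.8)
The plan is to argue by induction on the recursion depth, tracking how the boundary set can grow at each recursive call. The base case is the top-level call, where $|B|\leq 2$ (or $|B|=0$), which is trivially below both $\frac{60\ln|P|}{\alpha}$ and $12\ln|P|$ once $t$ is large. For the inductive step, consider a call with boundary set $B$ satisfying the bound, and examine the boundary sets $B_1,B_2$ passed to the two recursive calls. A point of $P_i$ becomes a boundary point either because it was already in $B$ (and still has an incident segment to be chosen), or because it is an endpoint of a segment in $S_{cr}\cup S_{end}$. Hence $|B_1|+|B_2| \leq |B| + 2|S_{cr}| + 2|R\cap P|$, since each crossing segment and each entering/interior segment contributes at most two new boundary points, and $\End$-segments contribute endpoints that lie in $R\cap P$ on one side. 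So the crux is: each recursive call adds at most $2(s_{cr}+n_{in})$ to the total boundary size, and we have already bounded $2(s_{cr}+n_{in})$ by $35\frac{\ln n}{\alpha}$ when $\alpha<4$ (equation~\eqref{eq:boundarybd1}) and by $7\ln n$ when $\alpha\geq 4$ (equation~\eqref{eq:boundarybd2}).

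The key point that prevents unbounded accumulation is the case split on line~\ref{step:sep}. When $|B|$ is small — i.e.\ $|B|<\max\!\big(\frac{40\ln|P|}{\alpha},8\ln|P|\big)$ — we use the centerpoint of $P$, so $|P|$ drops to at most $\frac23|P|$ on each side, but $|B|$ may have grown; still, after this step $|B_i|\leq |B| + 2(s_{cr}+n_{in}) < \frac{40\ln|P|}{\alpha}+\frac{35\ln|P|}{\alpha} < \frac{60\ln|P|}{\alpha}$ in the regime $\alpha<4$, and $|B_i|<8\ln|P|+7\ln|P|<12\ln|P|$ in the regime $\alpha\geq 4$ (using $|P|\leq n$ and monotonicity of $\ln$; more care is needed since $|P_i|<|P|$, but $\ln|P_i|\leq\ln|P|$ only helps us). When $|B|$ is large — at least $\max\!\big(\frac{40\ln|P|}{\alpha},8\ln|P|\big)$ — we instead take the centerpoint of $B$, so $|B|$ itself drops to at most $\frac23|B|$ (plus the entering/crossing segments through $R$), which must be shown to be enough to offset the additive growth: $|B_i|\leq \frac23|B| + 2(s_{cr}+n_{in})$, and since $|B|\geq \max(\frac{40\ln|P|}{\alpha},8\ln|P|)$ the additive term $2(s_{cr}+n_{in})$ is at most $\frac{1}{3}|B|$ in both regimes (because $35\frac{\ln n}{\alpha}\leq \frac13\cdot\frac{40\ln n}{\alpha}$ fails — wait, $\frac{40}{3}>35/3$, so actually $35\frac{\ln n}{\alpha} > \frac13\cdot\frac{40\ln n}{\alpha}$; I will need the slightly sharper accounting that only endpoints lying on $P_i$'s side count, or a constant adjustment), yielding $|B_i|\leq |B|$ and the invariant is preserved.

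The step I expect to be the main obstacle is making the large-$|B|$ case genuinely contracting: one must verify that the additive term $2(s_{cr}+n_{in})$ is dominated by the savings $\frac13|B|$ from the $2/3$-balanced split on $B$, and this requires the thresholds on line~\ref{step:sep} ($\frac{40\ln|P|}{\alpha}$ and $8\ln|P|$) to be chosen large enough relative to the constants $35$ and $7$ coming from \eqref{eq:boundarybd1}–\eqref{eq:boundarybd2}. A subtlety is that $|P|$ (and hence $\ln|P|$) changes between recursion levels, so when bounding $2(s_{cr}+n_{in})$ at a deeper level one uses the current $|P|$, which is smaller — this only helps, but one must phrase the invariant in terms of the current $|P|$ at each node rather than the global $n$. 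I would also double-check the edge case where $\alpha<4$ at the root but the spacing parameter passed down is still the \emph{initial} $\alpha$ (as the algorithm keeps $\alpha$ fixed), so the regime does not switch during recursion, and the two bounds $\frac{60\ln|P|}{\alpha}$ and $12\ln|P|$ never need to be reconciled within a single branch.
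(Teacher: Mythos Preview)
Your overall induction strategy matches the paper's, but the numerical execution fails because you overcount the additive growth by a factor of~$2$. You bound the \emph{sum} $|B_1|+|B_2|$ by $|B|+2(s_{cr}+n_{in})$ and then apply that same bound to each $|B_i|$ individually, writing $|B_i|\leq |B|+2(s_{cr}+n_{in})$. With the bounds~\eqref{eq:boundarybd1}--\eqref{eq:boundarybd2} this gives $40+35=75\not<60$ and $8+7=15\not<12$, so neither the small-$|B|$ nor the large-$|B|$ case closes (you yourself notice the latter). The point you are missing is that every segment in $S_{cr}\cup S_{end}$ crosses $\ell$, hence contributes exactly one endpoint to each side; consequently the correct per-side bound is $|B_i|\leq |B|+(s_{cr}+n_{in})$, i.e.\ an additive term of at most $17.5\frac{\ln|P|}{\alpha}$ (resp.\ $3.5\ln|P|$). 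With this the arithmetic works cleanly: $40+17.5=57.5<60$, $8+3.5=11.5<12$, and for large $|B|$ one gets $\tfrac{2}{3}\cdot 60+17.5=57.5<60$ and $\tfrac{2}{3}\cdot 12+3.5=11.5<12$.

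There is a second sign error: you write that ``$\ln|P_i|\leq\ln|P|$ only helps us,'' but it does not. The invariant you must establish is $|B_i|\leq \max\big(\tfrac{60\ln|P_i|}{\alpha},12\ln|P_i|\big)$, phrased in terms of the \emph{child's} $|P_i|$, and since $\ln|P_i|<\ln|P|$ this target is \emph{smaller} than the parent's. The paper absorbs this by using the slack $57.5<60$ (resp.\ $11.5<12$) together with $|P_1|\geq |P|/3\Rightarrow \ln|P|<\ln|P_1|+1.1$, which for $t$ large enough gives $57.5\ln|P|<60\ln|P_1|$. Your proposal never exploits this slack because the factor-of-$2$ overcount already exhausts it.
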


\begin{proof}
The statement holds for the initial call as we have $|B|=2$ and $|P|=n$ there.

Notice that if $|B|<\max(\frac{40\ln |P|}{\alpha}, 8\ln |P|)$, then we use the branch on line~\ref{step:balance_normal}. Consequently, the boundary set $B_1$ (and $B_2$) in the new recursive call always has size at most $|B|+(s_{cr}+n_{in})$. So by induction and the bounds~\eqref{eq:boundarybd1} and~\eqref{eq:boundarybd2}, we have that
\begin{align*}
|B_1|&\leq \max\left(\frac{40\ln |P|}{\alpha}, 8\ln |P|\right)+\max\left(\frac{17.5\ln |P|}{\alpha}, 3.5\ln |P|\right)\\
&<\max\left(\frac{57.5\ln |P|}{\alpha}, 11.5\ln |P|\right)\\
&<\max\left(\frac{60\ln |P_1|}{\alpha}, 12\ln |P_1|\right),
\end{align*}
where we use $|P_1|\geq |P|/3 \Rightarrow \ln(|P|)<\ln(P_1)+1.1$; therefore, the last inequality holds if we set the threshold $t$ large enough.

In case of $|B|\geq \max(\frac{40\ln |P|}{\alpha}, 8\ln |P|)$, we use the branch on line~\ref{step:balance_trick}. We have that $|B_1|\leq \frac{2}{3}|B|+(s_{cr}+n_{in})$. By induction, we still have $|B|<\max(\frac{60\ln |P|}{\alpha}, 12\ln |P|)$, so
\begin{align*}
|B_1|&\leq \frac{2}{3}\max\left(\frac{60\ln |P|}{\alpha}, 12\ln |P|\right)+\max\left(\frac{17.5\ln |P|}{\alpha}, 3.5\ln |P|\right)\\
&<\max\left(\frac{60\ln |P_1|}{\alpha}, 12\ln |P_1|\right).
\qedhere
\end{align*}
\end{proof}

The number of perfect matchings on a boundary set $B_1$ is at most $|B_1|^{O(|B_1|)}$. 
Let $b\eqdef\max(\frac{60\ln |P|}{\alpha}, 12\ln |P|)$ be the bound acquired above. The number of iterations of the loop at line~\ref{step:matchloop} is at most $b^{O(b)}$. If $\alpha\geq 4$, then this is $(\ln|P|)^{O(\ln |P|)}=|P|^{O(\ln\ln|P|)}<|P|^{\epsilon\ln|P|}$ for any $\epsilon>0$, as long as $|P|$ is large enough. If $\alpha<4$, then we get

\[b^{O(b)}=\left(\frac{\ln n}{\alpha}\right)^{O(\frac{\ln n}{\alpha})}=n^{O(\frac{1}{\alpha}(\ln\ln n + \ln(1/\alpha) )}.\]
As long as $1/\alpha=n^{o(1)}$, this term is insignificant compared to the iterations of the other loop. Otherwise, we have $\frac{1}{\alpha}\leq \sqrt{n}$, and therefore 
\[b^{O(b)}=n^{O(\frac{1}{\alpha}(\ln\ln n + \ln(1/\alpha) )}=n^{O(\frac{\log n}{\alpha})}.\]

\begin{remark}
If one wants to optimize the leading coefficient in the exponent of the eventual running time, then it is possible to modify the algorithm to use only $c^{|B_1|}$ matchings for $M_1$ as all other matchings lead to crossings. See for example the technique in~\cite{DeinekoKW06}. As a consequence, the leading coefficient will not be influenced by the second loop at all. However, this effort would be in vain if there exists a significantly better algorithm for $\alpha\leq 1$, say $n^{O(\log n \cdot (1/\alpha))}$ or even $n^{O(1/\alpha)}$, which we cannot rule out yet.
\end{remark}

The following lemma finishes the proof of Theorem~\ref{thm:main}.

\begin{lemma}\label{lem:runtime}
The running time of Algorithm~\ref{alg:hTSP} on our initial call is $n^{O(\log^2 n)\max(1,1/\alpha)}$.
\end{lemma}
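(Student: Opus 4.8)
The plan is to set up a recursion for the running time in terms of the number of points $|P|$ at a given recursion node, multiply the branching factor (the product of the loop sizes analyzed above) by the cost of the two recursive calls, and bound the recursion depth. First I would fix notation: let $T(m)$ be the running time of a call with $|P|=m$ points. By the analysis preceding the lemma, the three loops (over $S_{cr}$, over $S_{end}$, and over the matchings $M_1,M_2$) contribute a combined branching factor of at most $n^{2(s_{cr}+n_{in})}\cdot b^{O(b)}$, which by \eqref{eq:boundarybd1}, \eqref{eq:boundarybd2}, and the subsequent discussion of $b^{O(b)}$ is bounded by $n^{O(\log n)\max(1,1/\alpha)}$. (Note that $n$ here is the \emph{initial} number of points, which only decreases in recursive calls, so this bound is valid at every node; the parameter $\alpha$ is also fixed throughout, being passed down unchanged.) Each leaf of the recursion (when $|P|\le t$) costs $O(1)$ by brute force, and every non-leaf does $O(n^2)$ additional work outside the loops and recursive calls. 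Hence
\[
T(m) \;\le\; n^{O(\log n)\max(1,1/\alpha)}\cdot\bigl(T(m_1)+T(m_2)\bigr) + O(n^2),
\]
where $m_1+m_2\le m$ and, crucially, $m_1,m_2\le \tfrac{2}{3}m$ because $\ell$ passes through a centerpoint of $P$ (Lemma~\ref{lem:sep}) — we must check that even in the branch on line~\ref{step:balance_trick}, where $\ell$ passes through a centerpoint of $B$ rather than of $P$, the split of $P$ is still $\tfrac23$-balanced; this holds since that branch is only taken when $|B|$ is large, but actually the cleanest fix is to observe that we may always \emph{also} require $\ell$ to be a centerpoint line of $P$ in that case, or to note (as in \cite{BergBKK18}) that unbalance in $P$ is harmless because the boundary, not the point count, drives the recursion — I would follow whichever convention \cite{BergBKK18} uses and cite it.

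Given the $\tfrac23$-balanced split, the recursion tree has depth $O(\log_{3/2} m) = O(\log n)$. The number of nodes in the recursion tree is therefore at most $\bigl(n^{O(\log n)\max(1,1/\alpha)}\bigr)^{O(\log n)} = n^{O(\log^2 n)\max(1,1/\alpha)}$, since along every root-to-leaf path the branching factor $n^{O(\log n)\max(1,1/\alpha)}$ is multiplied $O(\log n)$ times. Each node does at most $n^{O(\log n)\max(1,1/\alpha)}\cdot\mathrm{poly}(n)$ work (the loop sizes dominate the $O(n^2)$ overhead), so the total running time is the number of nodes times the per-node cost, which is again $n^{O(\log^2 n)\max(1,1/\alpha)}$. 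Combining this with the reduction at the start of Section~4 — solving \HTSP on $P$ via $n-1$ calls to \BTSP, each an invocation of Algorithm~\ref{alg:hTSP} — multiplies the bound by a factor of $n$, which is absorbed into $n^{O(\log^2 n)\max(1,1/\alpha)}$, completing the proof of the lemma and hence of Theorem~\ref{thm:main}.

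The main obstacle I anticipate is bookkeeping the interplay between the two regimes $\alpha<4$ and $\alpha\ge 4$ cleanly: the branching factor is $n^{O(\log n/\alpha)}$ in the first regime and $n^{O(\log n)}$ (in fact $n^{\epsilon\log n}$ for any $\epsilon>0$, from the matching loop) in the second, and one wants a single uniform statement $n^{O(\log n)\max(1,1/\alpha)}$. This is handled by the $\max(1,1/\alpha)$ wrapper and by noting that $\max(1,1/\alpha)$ is monotone and that all the constants hidden in the $O(\cdot)$ are absolute (independent of $n$ and $\alpha$, valid once $t$ is a large enough constant). A secondary subtlety is ensuring the recursion actually terminates and that $|P_1|,|P_2|<|P|$ strictly: this follows because once $|P|>t$ the separator line $\ell$ lies in the empty cone $C$, so it does not pass through any point of $P$, hence every point of $P$ lies strictly on one side of $\ell$ or is covered by $S_{cr}\cup S_{end}$, and in the non-degenerate branches each side loses at least one point; the $\tfrac23$ factor then gives geometric decrease. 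I would state these points briefly and defer to \cite{SmithW98,BergBKK18} for the standard parts of the divide-and-conquer bookkeeping.
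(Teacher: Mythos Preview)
Your proposal is correct and follows the same approach as the paper: bound the per-level branching factor by $n^{O(\log n)\max(1,1/\alpha)}$ using the loop-size estimates already established, observe that the recursion has depth $O(\log n)$ from the $\tfrac{2}{3}$-balanced split, and multiply. The paper's own proof is terser---it simply writes $T(n)\le n^{O(\max(\log n/\alpha,\,\log n))}\,T(\tfrac{2}{3}n)$ and unrolls the product over $O(\log n)$ levels---and does not explicitly address the line~\ref{step:balance_trick} branch you flag, so your extra caution there is appropriate but not a methodological departure.
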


\begin{proof}
By the analysis above, the running time for an instance $(P,B,M,\alpha)$ with $|P|=n$ satisfies the following recursion.
\[T(n)\leq n^{O(\max(\frac{\log n}{\alpha},\log n))}T\left(\frac{2}{3}n\right)\]
Therefore, there exists a constant $c$ such that the running time is at most
\begin{align*}
T(n)&\leq n^{\max(1,1/\alpha)\cdot c \log n} \left(\frac{2}{3}n\right)^{\max(1,1/\alpha)\cdot c (\log (\frac{2}{3}n))} \cdot \left(\frac{4}{9}n\right)^{\max(1,1/\alpha)\cdot c (\log (\frac{4}{9}n))} \cdot \dots\\
&=  n^{\max(1,1/\alpha)\cdot c (\log n + \log(\frac{2}{3}n) +\log(\frac{4}{9}n)+\dots )}\\
&=n^{\max(1,1/\alpha)\cdot O(\log^2 n)}.\qedhere
\end{align*}
\end{proof}

\section{Lower bound for point sets with dense point pairs}\label{sec:app_hyptsplower}

In this section we prove the following theorem.

\begin{theorem}\label{thm:lowerHTSP}
If ETH holds, then there is a constant $c>0$ such that there is no $2^{o(\sqrt{n})}$ algorithm for \HTSP on $c/\sqrt{n}$-spaced point sets.
\end{theorem}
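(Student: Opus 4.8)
The plan is to reuse the grid-based ETH lower bound for Euclidean TSP from~\cite{frameworkpaper} (the same source cited in the introduction for the $2^{o(\sqrt n)}$ bound) and transplant the hard instance into $\Hyp^2$ at a controlled scale. Recall that the Euclidean lower bound produces, from a \niceSAT instance of size $N$, a point set of $\Theta(N)$ points laid out essentially on an $O(\sqrt N)\times O(\sqrt N)$ integer grid (with the unit of the grid normalized so that consecutive points are at distance $1$), such that a YES-instance admits a short tour and a NO-instance does not, and the gap is at least a constant fraction of a grid unit; hence distinguishing the two cases solves \niceSAT, and a $2^{o(\sqrt n)}$ TSP algorithm on $n=\Theta(N)$ points would give a $2^{o(N)}$ algorithm for \niceSAT, contradicting ETH. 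The only thing I need to check is that this whole construction can be realized \emph{inside a small disk of $\Hyp^2$} while (a) keeping the minimum interpoint distance at $\Theta(1/\sqrt n)$, and (b) keeping all pairwise distances multiplicatively $(1\pm o(1))$-close to the Euclidean ones, so that the length gap between YES and NO instances survives.

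The key steps, in order: (1) Take the Euclidean hard instance on the grid $\{0,\dots,k\}^2$ with $k=O(\sqrt N)$ and $n=\Theta(N)$ points, and rescale it by a factor $\eps := c'/\sqrt n$ for a small constant $c'$, so the whole configuration fits in a Euclidean disk of radius $R_E = O(\eps\sqrt n) = O(1)$ and consecutive grid points are at distance exactly $\eps = \Theta(1/\sqrt n)$. (2) Embed this Euclidean disk isometrically-up-to-$(1+f(\cdot))$ into $\Hyp^2$: by the ``locally Euclidean'' property quoted in Section~\ref{sec:prelim}, a hyperbolic disk of constant radius $R_E$ around a point admits a smooth bijection to the Euclidean disk of the same radius distorting distances by a factor at most $1+f(R_E)$ where $f(R_E)$ is a fixed constant $<1$ --- but a fixed constant distortion is not good enough, since we need the length \emph{gap} (which is $\Theta(\eps) = \Theta(1/\sqrt n)$ relative to a total tour length of $\Theta(\eps n)=\Theta(\sqrt n)$, i.e. a relative gap of $\Theta(1/n)$) to be preserved. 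So instead I rescale the hyperbolic \emph{target} disk to have radius $\eps\sqrt n \cdot \lambda$ for a slowly-shrinking $\lambda$, or more simply: map the Euclidean configuration into a hyperbolic disk of radius $O(\eps\sqrt n)$ but observe that since every individual edge of a near-optimal tour has Euclidean length $O(\eps) = O(1/\sqrt n) \to 0$, and the exponential map / Poincaré-model distance formula is $C^1$-close to the Euclidean one on scales $\to 0$, each edge length is distorted by a factor $1 + O(\eps) = 1 + O(1/\sqrt n)$. Summing over $O(n)$ edges, the total tour length is distorted by a factor $1+O(1/\sqrt n)$, i.e. additively by $O(\eps n \cdot 1/\sqrt n) = O(1)$ --- still potentially larger than the gap. (3) Fix this by shrinking the global scale: choose $\eps = c'/\sqrt n$ but then further contract the whole picture by a factor $\mu = 1/\mathrm{poly}(n)$ so that each edge length becomes $O(1/\mathrm{poly}(n))$, the per-edge relative distortion becomes $O(1/\mathrm{poly}(n))$, hence the total additive distortion is $o(\text{gap})$; crucially the minimum distance $\mu\eps$ is still $1/\mathrm{poly}(n) = \Theta(1/\sqrt n)$ \emph{after renormalizing the statement's ``$n$''} --- actually the cleanest route is to note the theorem only claims spacing $c/\sqrt n$ for \emph{some} constant $c$, so I have full freedom to pick the global scale as small as I like as long as it stays $\Theta(1/\sqrt n)$, and I just need the scale small enough (a constant, in fact, suffices once one tracks the relative error: the Euclidean grid at scale $\eps$ has edges of length $\eps$, relative distortion $O(\eps)$ per edge on a hyperbolic disk of radius $O(1)$, total relative distortion $O(\eps) = O(1/\sqrt n)$, and since the YES/NO gap in the construction of~\cite{frameworkpaper} is a \emph{constant fraction} of a single edge, i.e. $\Theta(\eps)$, we need $O(\eps^2 n) = O(1) \ll \Theta(\eps\sqrt n)$... ). (4) Conclude: the embedded point set $P \subset \Hyp^2$ is $\Theta(1/\sqrt n)$-spaced, $|P| = \Theta(N)$, and the optimal hyperbolic tour length is within $o(\text{gap})$ of the optimal Euclidean tour length, so the YES and NO cases remain separated; a $2^{o(\sqrt{|P|})}$ algorithm for \HTSP therefore decides \niceSAT in $2^{o(N)}$ time, contradicting ETH.

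The main obstacle is exactly the tension flagged in step (2)--(3): the hyperbolic--Euclidean distortion on a disk of radius $R$ is a \emph{fixed} factor $1+f(R)$, not $1+o(1)$, so naively embedding a radius-$\Theta(\sqrt n)$-worth of grid would be fatal; the resolution is to keep the whole instance inside a disk of \emph{constant} radius by using spacing $\Theta(1/\sqrt n)$ (which is allowed) rather than spacing $1$, and then to carefully verify that the \emph{relative} length error, which scales like the edge length (hence like $1/\sqrt n$) rather than like the disk radius, is genuinely smaller than the relative YES/NO gap in the Euclidean construction. This requires citing (or re-deriving from the Poincaré metric $\dist(u,v) = \cosh^{-1}(1 + 2\norm{u-v}^2 / ((1-\norm u^2)(1-\norm v^2)))$) a quantitative bilipschitz bound of the form: on the disk $\norm u \leq r_0$ for fixed $r_0<1$, $\dist(u,v) = \norm{u-v}\cdot(1 + O(\norm{u-v}) + O(\mathrm{dist\ to\ origin}))$, which makes the argument rigorous. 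The remaining pieces --- that the hard Euclidean instance lies on a grid with a constant-fraction-of-an-edge gap, and that optimal hyperbolic tours are non-crossing and use geodesic segments --- are already available (from~\cite{frameworkpaper} and Section~\ref{sec:prelim} respectively), so the proof is essentially a scale-tracking exercise on top of the known Euclidean reduction.
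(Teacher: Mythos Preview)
Your approach has a genuine gap at the distortion estimate in steps~(2)--(3). The claim that each edge of length $\eps$ is distorted by a factor $1+O(\eps)$ under an embedding into a hyperbolic disk of constant radius is false: in geodesic normal coordinates the metric is $g_{ij}=\delta_{ij}+O(r^2)$ where $r$ is the distance to the origin, so the relative distortion of a short segment depends on its \emph{position} $r$, not on its length $\ell$. Concretely, a tangential segment of Euclidean length $\ell$ at radius $r$ under $\exp_0$ has hyperbolic length $\ell\cdot\sinh(r)/r = \ell(1+\Theta(r^2))$. Your configuration has radius $R=\Theta(\eps\sqrt n)=\Theta(1)$, so the relative distortion ranges over an interval of width $\Theta(1)$, not $\Theta(\eps)$. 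Consequently two tours that use edges in different parts of the disk can differ in hyperbolic length by $\Theta(\text{tour length})=\Theta(n\eps)$, which swamps the YES/NO gap $\Theta(\eps)$ by a factor of $n$. Your own trailing computation ``$O(\eps^2 n)=O(1)\ll\Theta(\eps\sqrt n)$'' already shows the numbers do not close even under the (incorrect) $O(\eps)$-per-edge assumption; under the correct $O(R^2)=O(1)$-per-edge bound the situation is far worse. No choice of global scale fixes this while keeping the spacing at $\Theta(1/\sqrt n)$: the obstruction is the curvature gap between $\Reals^2$ and $\Hyp^2$, which forces $\Omega(R^2)$ metric distortion on any region of diameter $R$.

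This is exactly why the paper does \emph{not} black-box embed the Euclidean instance. Instead it rebuilds the Itai--Papadimitriou--Szwarcfiter tentacle construction directly in $\Hyp^2$: it lays down a grid-like scaffold from perpendicular lines $\ell_j$ and hypercycles $a_i$, and then threads each arc of the source digraph with a tentacle made of ``fat quadrangles'' whose side lengths are \emph{exactly} $\alpha$ and whose diagonals are at least $(1+\eps)\alpha$ (Lemmas~\ref{lem:adjuststrip} and Corollary~\ref{cor:getstrip}). Because every edge of the resulting graph has hyperbolic length exactly $\alpha$ and every non-edge has length at least $(1+\eps)\alpha$, the Hamiltonian-cycle-to-TSP gap survives without any distortion accounting. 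The substantive work is showing that such exact-length tentacles can be routed along the hyperbolic scaffold; this is where the flexibility of fat quadrangles (a continuum of shapes with fixed side length) is used, and it is the missing idea in your proposal.
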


We say that a planar digraph $D=(V,A)$ can be drawn in an $n\times n$ grid if we can map its vertices to grid points in the $n\times n$ grid, and we can map each arc $uv\in A$ to vertex disjoint grid paths within the $n\times n$ grid connecting the corresponding grid points.

Let $\cC$ denote the class of directed planar graphs that can be drawn in the $n\times n$ grid,\footnote{Note that the planar graphs in $\cC(n)$ may have $\Omega(n^2)$ vertices.} and at each vertex either the indegree is $1$ and the outdegree is $2$, or the indegree is $2$ and the outdegree is $1$.

The proof is a reduction from directed Hamiltonian cycle in $\cC$.  A middle step in the lower bound for Hamiltonian cycle in~\cite{frameworkpaper} shows that under ETH, there is no $2^{o(n)}$ algorithm for directed Hamiltonian cycle in $\cC(n)$. Our goal is to create a point set $P$ inside $\Hyp^2$ within distance $O(1)$ from the origin, where edges will be represented with ``tentacles'' similarly to the construction of Itai~\etal~\cite{ItaiPS82} for Hamiltonian cycle in grid graphs.

First, given a directed planar graph $G_0\in \cC$ drawn in a Euclidean grid of size $n\times n$, we define a class $\cG$ of undirected planar graphs, each of which has a Hamiltonian cycle if and only if $G_0$ has one. Finally, we show that there is an element $G\in \cG$ and a corresponding $c/n$-spaced point set $P\subset \Hyp^2$ of size $O(n^2)$ such that connecting point pairs of $P$ who are at distance exactly $c/n$ gives a planar drawing of $G$.

\paragraph*{Defining the graph class $\cG$}

Following the terminology of Itai~\etal~\cite{ItaiPS82}, a \emph{strip} is a rectangular grid graph of width $2$, i.e., the Cartesian product of path on at least $3$ vertices with a path on $2$ vertices. See Figure~\ref{fig:stripbend} for an illustration. When drawn in a grid horizontally, a strip consists of some grid squares; the leftmost and rightmost squares are called the \emph{ending squares} of the strip. The edges of the strip that are horizontal in this drawing are called \emph{proper} edges. A \emph{bending} at a square selects a non-ending square with proper edges $uv$ and $u'v'$, contracts the edge $uv$, and subdivides the edge $u'v'$ with a new vertex $w'$. A \emph{tentacle} is a graph obtained from a strip by bending it at some collection of non-adjacent squares. The \emph{ending edges} of a tentacle are the two edges that are induced by vertices who are only incident to the ending squares. Note that we also allow tentacles that are not subgraphs of the Euclidean grid.

\begin{figure}[t]
\centering
\includegraphics{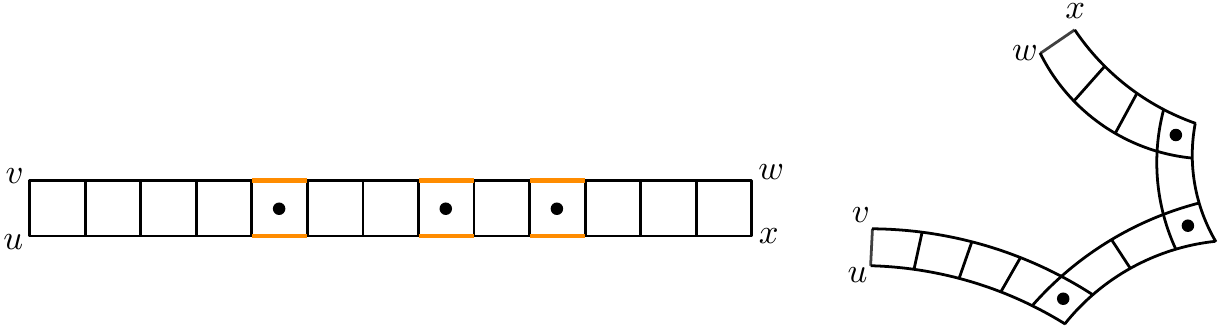}
\caption{Left: A strip with ending edges $uv$ and $wx$, and some proper edges marked for bending. Right: The tentacle created by bending.}\label{fig:stripbend}
\end{figure}

A key observation of~\cite{ItaiPS82} is that a tentacle can be used to represent an edge of $G_0$, as one can ``traverse'' a tentacle in a snake-like fashion to simulate that the Hamiltonian cycle of $G_0$ uses the edge, or it is possible to take a long detour through the tentacle to simulate the fact that the original edge was not contained in the Hamiltonian cycle of $G_0$. 

Let $G_0$ be a directed planar graph of maximum total degree $3$ together with a fixed drawing in an $n\times n$ grid. We replace each arc  of $G_0$ with an arbitrary tentacle drawn in the plane, with ending edges assigned to the source and target of the arc. These tentacles are connected using vertex gadgets, as described below.

\begin{figure}[t]
\centering
\includegraphics{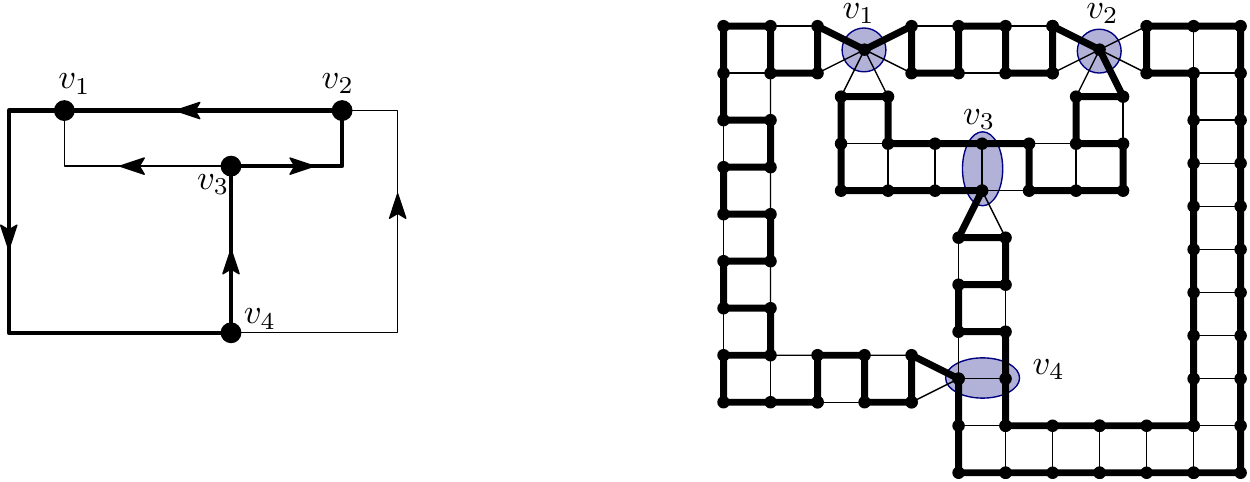}
\caption{A planar graph $G_0\in \cC$ with a directed Hamiltonian cycle, and the corresponding Hamiltonian cycle in a graph $G\in \cG$ with highlighted vertex gadgets.}\label{fig:ham_ex}
\end{figure}

Consider a vertex with indegree $1$. The corresponding vertex gadget has two vertices, $a$ and $b$, and the edge $ab$. The ending edges corresponding to the outgoing arc tentacles are identified with $ab$ , and if the incoming arc tentacle has ending edge $vv'$, then we connect both $v$ and $v'$ to either $a$ or $b$, see Figure~\ref{fig:vertexgadget} for an example. The edge identification and the choice of $a$ or $b$ can be done in a manner that loosely follows the drawing of $G_0$, see Figure~\ref{fig:ham_ex}. We now define the vertex gadget for a vertex of indegree $2$. We create a single vertex, which is connected to the endpoints of the ending edge of each incident arc tentacle, again following the planar drawing of $G_0$.

The planar graphs that can be created in the above manner form the graph class $\cG\eqdef \cG(G_0)$.

\begin{lemma}\label{lem:equiv_G}
For all graphs $G\in \cG$, the graph $G$ has a Hamiltonian cycle if and only if $G_0$ has a directed Hamiltonian cycle.
\end{lemma}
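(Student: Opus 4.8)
The plan is to prove both directions of the equivalence by carefully tracking how a Hamiltonian cycle interacts with the tentacle and vertex gadgets, following the strategy of Itai et al.~\cite{ItaiPS82}. First I would establish the key structural fact about a single tentacle: if $T$ is a tentacle with ending edges $e_1=u_1v_1$ and $e_2=u_2v_2$, then in any Hamiltonian cycle of a graph containing $T$ as a subgraph (where the only vertices of $T$ with neighbours outside $T$ are the four endpoints $u_1,v_1,u_2,v_2$), the restriction of the cycle to $T$ is one of exactly two types: either (a) a single Hamiltonian path of $T$ from $\{u_1,v_1\}$ to $\{u_2,v_2\}$ that does \emph{not} use $e_1$ or $e_2$ — the ``traversed'' mode, simulating that the corresponding arc of $G_0$ is in the Hamiltonian cycle — or (b) a pair of paths, one being just $e_1$ and one being just $e_2$ (equivalently, a Hamiltonian path of $T$ that uses both ending edges internally in a snake pattern), the ``detour'' mode, simulating that the arc is absent. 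This is the combinatorial heart of the argument; it follows by analyzing the degree-$2$ and degree-$3$ structure of a strip and checking that bendings do not create new options.

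Next I would analyze the vertex gadgets. For an indegree-$1$ vertex, the gadget is the single edge $ab$, with the two outgoing tentacles' ending edges identified with $ab$ and the incoming tentacle's ending edge $vv'$ attached so that $v$ and $v'$ are both joined to one of $a$ or $b$. I would check that a Hamiltonian cycle must pass through $a$ and $b$, and that the consistency of the tentacle modes meeting at this vertex forces exactly the behaviour ``one incoming arc used, one outgoing arc used'' — i.e., the cycle enters along the incoming tentacle and leaves along exactly one of the two outgoing tentacles, the other outgoing tentacle being in detour mode. For the indegree-$2$ vertex (a single vertex $w$ joined to the endpoints of all three incident tentacles' ending edges), I would similarly check that the degree-$2$ condition on $w$ in the Hamiltonian cycle forces exactly two of the three incident tentacles to be traversed and one to be a detour, and that orientation-compatibility makes this ``two incoming, one outgoing'' pattern. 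Putting these together gives a bijection-like correspondence: a Hamiltonian cycle of $G$ selects, for each arc of $G_0$, whether its tentacle is traversed, and the vertex-gadget constraints say precisely that the set of traversed arcs forms a directed Hamiltonian cycle of $G_0$ (each vertex has exactly one incoming and one outgoing selected arc, and connectivity of the Hamiltonian cycle of $G$ forces the selected arcs to form a single directed cycle rather than a union of cycles).

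For the converse direction, given a directed Hamiltonian cycle $H_0$ of $G_0$, I would construct a Hamiltonian cycle of $G$ explicitly: put every tentacle of an arc in $H_0$ into traversal mode (using its Hamiltonian path avoiding the ending edges) and every other tentacle into detour mode (using the snake path that covers all its vertices while ``parking'' via the ending edges at the endpoints), then connect these up through the vertex gadgets according to the two gadget types. One checks this is a single cycle visiting every vertex exactly once, using that $H_0$ is a single directed cycle and that the detour mode covers all vertices of an unused tentacle while re-entering the gadget at the same place it left.

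The main obstacle I expect is the first step: rigorously ruling out ``exotic'' behaviours of a Hamiltonian cycle inside a bent tentacle — in particular, showing that a bending (contracting one proper edge and subdividing the opposite one) genuinely preserves the two-mode dichotomy and does not, for instance, allow the cycle to backtrack or to leave a vertex uncovered. This requires a careful but elementary case analysis on the local structure at each bent and unbent square of the strip; I would handle it by induction on the number of squares, with the bending operation as the inductive step, reducing to the base case of an unbent strip where the dichotomy is immediate. The vertex-gadget analysis is then routine once the tentacle dichotomy is in hand, and the global ``single cycle'' bookkeeping is standard.
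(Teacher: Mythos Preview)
Your overall plan---establish how a Hamiltonian cycle can interact with a tentacle, analyze the vertex gadgets, then argue globally---is the paper's approach (which in turn follows Itai~\etal~\cite{ItaiPS82}); the paper only gives a sketch, so fleshing it out as you propose is appropriate.

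That said, your statement of the tentacle dichotomy is wrong in a way that would block the argument if taken literally. Mode~(b) as you write it---``a pair of paths, one being just $e_1$ and one being just $e_2$''---leaves every interior vertex of the tentacle uncovered, and your parenthetical alternative (``a Hamiltonian path of $T$ that uses both ending edges'') is neither equivalent to that nor the right description. The correct detour mode is a single Hamiltonian path of $T$ with both endpoints at \emph{one} end, say $u_1$ and $v_1$: it runs down one side of the strip, turns using the \emph{opposite} ending edge $e_2=u_2v_2$, and returns. Your later sentence about the detour ``re-entering the gadget at the same place it left'' shows you have the right picture, so this is a mis-statement rather than a misconception, but the lemma you intend to prove by induction on squares must be phrased this way or it will simply be false. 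Two further points: mode~(a) does not in general avoid both ending edges (a snake traversal of a strip typically begins with one of them), so that clause should be dropped; and the dichotomy is not a property of the tentacle in isolation---a strip also admits, for instance, a cover by two disjoint corner-to-corner paths, one at each end---so it is the vertex-gadget attachments (which would close a proper sub-cycle in that case) that eliminate the extra modes and force the detour to be anchored at the \emph{source} end of the arc, exactly as the paper uses. With the dichotomy restated correctly and this gadget interaction folded in, the remainder of your plan matches the paper and goes through.
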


\begin{proof}[Proof sketch.]
If $G_0$ has a directed Hamiltonian cycle, then we can create a Hamiltonian cycle of $G$ the following way. Let $v$ be a vertex of indegree $2$, and let us follow the Hamiltonian cycle in $G$ starting from $v$. If the next vertex on the Hamiltonian cycle is $w$, then we traverse the tentacle connecting the vertex gadgets of $v$ and $w$ in a snake-like manner (see for example the path from $v_1$ to $v_4$ in Figure~\ref{fig:ham_ex}). If $w$ has indegree $2$, then we again continue with a traversal of the outgoing arc tentacle corresponding to the next edge of the Hamiltonian cycle in $G_0$. Otherwise, we have arrived at a vertex $w_a$ in $G$, which is one of the vertices of a gadget that has two outgoing tentacles; let $w_b$ be the other vertex. We make a detour on the outgoing tentacle that corresponds to the arc not used by the Hamiltonian cycle of $G_0$ (not touching the gadget of the other endpoint), and arrive at $w_b$. Then we traverse the other outgoing tentacle starting from $w_b$. We continue the above procedure, until we eventually arrive back to the starting vertex $v$. Note that all vertices in vertex gadgets will be covered, and each tentacle is either traversed or a detour is made on it when we are going through the vertex gadget corresponding to its source.

If $G$ has a Hamiltonian cycle, then all tentacles are either traversed or have a detour on them that starts and ends at their source vertex gadget. One can check that there is indeed no other option for a Hamiltonian cycle to cover a tentacle corresponding to an arc. The arcs corresponding to traversed tentacles give a directed Hamiltonian cycle of $G_0$.
\end{proof}

\paragraph*{Constructing an element of $\cG$ as an $\alpha$-distance graph in $\Hyp^2$}

Our construction will have spacing exactly $\alpha=1/n$. In our figures, we draw an edge between a pair of points if and only if their distance is exactly $\alpha$. If there is no edge, then the distance must be strictly larger than $(1+\eps)\alpha$ for some fixed small $\eps>0$.

\begin{figure}[t]
\centering
\includegraphics{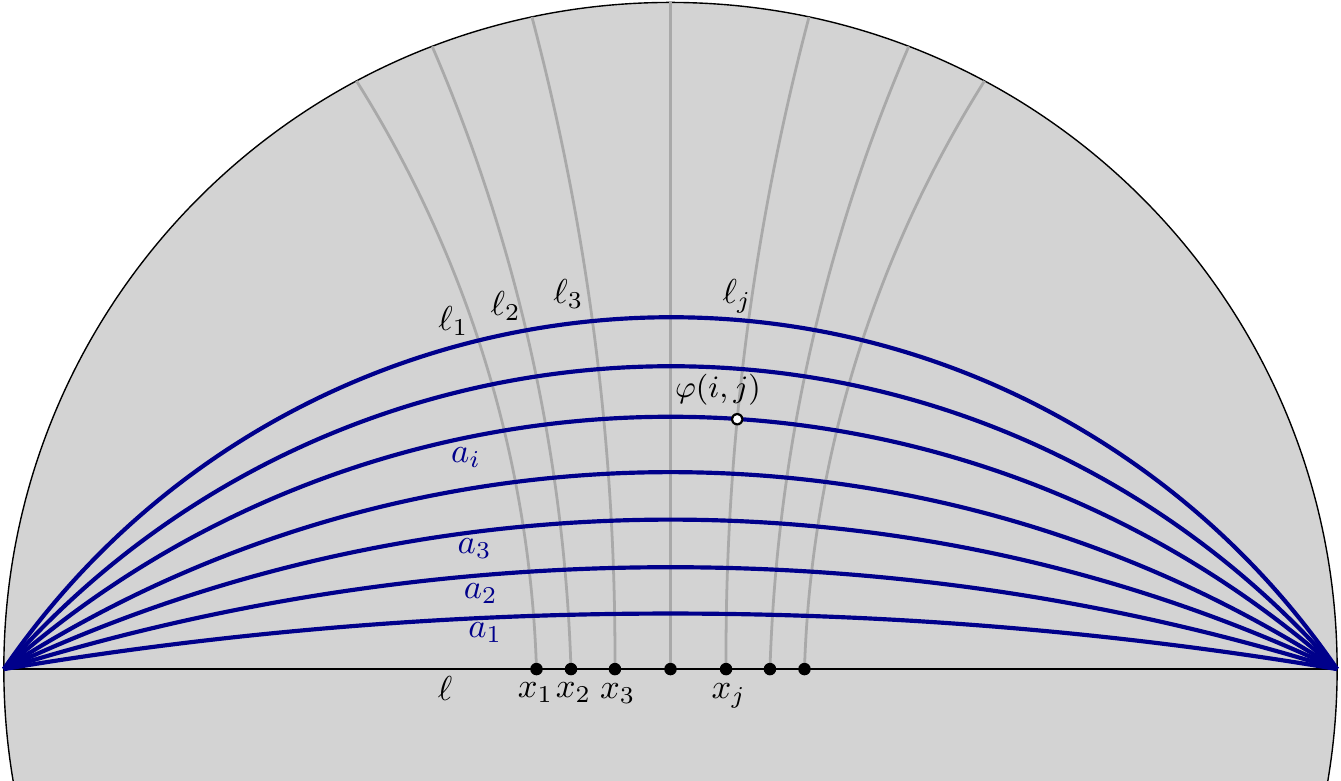}
\caption{A grid-like structure with lines $\ell_j$ and blue hypercycles $a_i$. }\label{fig:poingrid}
\end{figure}

First, we create a grid-like structure of spacing $c\alpha$ for some large constant $c$.
Let $\ell$ be a line, and let $x_1,\dots,x_n$ be points on this line where $\dist(x_j,x_{j+1})=c\alpha$ for $j=1.\dots,n-1$. We draw $n$ lines perpendicular to $\ell$, one through each point $x_i$, denoted by $\ell_1,\dots, \ell_n$, see Figure~\ref{fig:poingrid}. We take $n$ hypercycles on one side of $\ell$, with distances $c\alpha,2c\alpha,\dots,nc\alpha$ from $\ell$, denoted by $a_1,\dots, a_n$. We let $f$ denote the mapping of the $n\times n$ grid to $\Hyp^2$ where $f(i,j)\eqdef a_i\cap \ell_j$.

In the grid-like structure created above, vertically neighboring grid points $(i,j)$ and $(i+1,j)$ are mapped to hyperbolic points at distance exactly $c\alpha$, and they are connected by a line segment of $\ell_j$. For a pair of horizontally neighboring points $(i,j)$ and $(i,j+1)$, they are connected by a hypercycle arc of $a_i$, which has length $c\alpha\cosh(ic\alpha)$. Since $\cosh(x)\geq 1$, this is always at least $c\alpha$. On the other hand, we have
\begin{equation}
c\alpha\cosh(ic\alpha)\leq c\alpha\cosh(nc\alpha)=\Theta(1).\label{eq:gridedge}
\end{equation}

The goal is to follow the drawing of $G_0$ in the grid to realize the tentacles of a graph $G$. The vertex gadget for a vertex $v\in V(G_0)$ that is located at $(i,j)$ will be placed at $\phi(i,j)$, and a grid path corresponding to edge $uv$ will be represented by a tentacle that follows the lines $\ell_j$ and the hypercycles $a_i$ starting at the gadget of $u$ and ending at the gadget of $v$.

Our vertex gadgets are easy to construct, see Figure~\ref{fig:vertexgadget}.
\begin{figure}[t]
\centering
\includegraphics{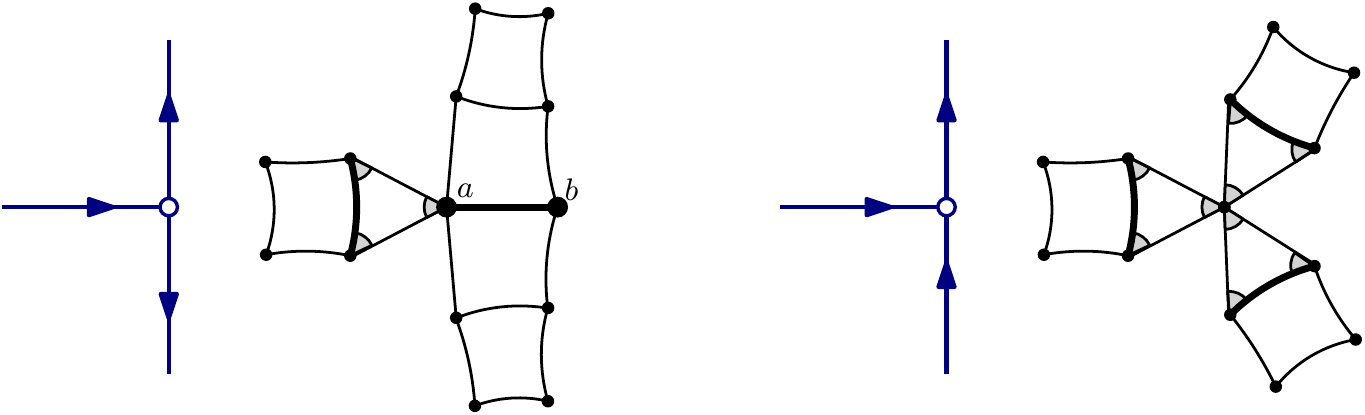}
\caption{Vertex gadgets with connecting tentacles and thick tentacle ending edges. The quadrangles and triangles are regular, and have degree less than $\pi/2$ and less than $\pi/3$ respectively. }\label{fig:vertexgadget}
\end{figure}

The tentacles use quadrangles whose sides have length exactly $\alpha$, and non-adjacent vertices have distance more than $\alpha(1+\eps)$ for some fixed $\eps>0$. A quadrangle with these properties is called \emph{fat}. Note that there is a continuum of fat quadrangles, and we can use this flexibility to adjust our tentacles as needed.

\begin{lemma}\label{lem:adjuststrip}
There exist constants $c_1<c_2$ and $\phi$ such that the following hold.
Let $\ell$ be a line or hypercycle arc, and let $ab$ and $a'b'$ be segments whose midpoints are on $\ell$ at distance $\delta\in (c_1\alpha,c_2\alpha)$, and their line intersects $\ell$ with angle in $[\pi/2-\phi,\pi/2+\phi]$. Then there is a strip of size at most $6$ built of fat quadrangles such that the ending edges are $ab$ and $a'b'$.
\end{lemma}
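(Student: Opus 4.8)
The plan is to build the strip locally, square by square, from one ending edge towards the other, always keeping the quadrangles fat and ensuring that after the prescribed number of squares we land exactly on the second ending edge $a'b'$. First I would set up coordinates adapted to $\ell$: pick the perpendicular from the midpoint of $ab$ to $\ell$ and measure arc length along $\ell$ from there. Since the excerpt establishes that near any point $\Hyp^2$ is close to Euclidean (the $1+f(\eps)$ distortion near a point, and for a hypercycle base the arc length scales by $\cosh$ of the offset, which is bounded by a constant on a region of diameter $O(1)$), within a fixed radius around the midpoint of $ab$ the whole configuration is a bi-Lipschitz copy of a Euclidean configuration with distortion as small as we like by choosing $c_2$ small and $\phi$ small. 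So the strategy is: prove the statement in the Euclidean model with some slack, then transfer to $\Hyp^2$ absorbing the distortion into the slack.

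The key steps, in order: (1) Fix the ``canonical'' fat quadrangle: a square of side $\alpha$ has its two diagonals of length $\alpha\sqrt 2 > \alpha(1+\eps)$ for $\eps < \sqrt2 - 1$, so it is fat with room to spare; by continuity a whole neighborhood of this square in quadrangle-space stays fat. (2) Observe that a strip of $k$ such squares laid side by side has its two ending edges parallel, of length $\alpha$, at distance $k\alpha$ apart, each perpendicular to the strip's axis. (3) Show that by tilting and slightly deforming the individual quadrangles (staying in the fat neighborhood from step (1)) we can independently (a) change the spacing of the two ending edges from $k\alpha$ to any value in an interval around $k\alpha$, (b) rotate the second ending edge's supporting line through a small angle, and (c) translate the second ending edge a small amount along its own line and along $\ell$ — i.e. the map from (deformation parameters) to (position/orientation of the far ending edge relative to the near one) is a submersion onto a neighborhood of the ``flat strip'' configuration. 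This is an inverse-function-theorem / degrees-of-freedom count: a strip of $k\geq 2$ internal squares has enough free parameters (each internal transverse edge can be lengthened, shortened, tilted) to realize a small open set of target configurations, so any $(ab,a'b')$ with $\delta$ in a short enough interval $(c_1\alpha,c_2\alpha)$ and angle within $\phi$ of perpendicular is hit. (4) Set $k$ so that a flat strip of $k$ squares has ending-edge spacing equal to, say, the midpoint of the allowed range; since $\delta/\alpha$ ranges over a bounded interval $(c_1,c_2)$ with $c_2$ a small absolute constant, a fixed $k\leq 6$ suffices (and $k$ could even be adjusted between a couple of values to cover the whole range, still staying $\leq 6$). (5) Transfer to $\Hyp^2$: the region used has diameter $O(\alpha) = O(1/n)$, hence $O(1)$, so all hyperbolic lengths and angles of the Euclidean construction are preserved up to a factor $1+f(O(1))$ which, after rescaling $c_1,c_2,\phi$ and shrinking the fat-neighborhood, is harmless — fatness is an open condition and is preserved, and the ending edges still have length exactly $\alpha$ because we define them to be hyperbolic segments of that length and solve for the quadrangle vertices in the hyperbolic metric via the same implicit-function argument run directly in $\Hyp^2$. (6) Conclude by exhibiting the concrete constants: $\phi$ small (coming from step (3)'s neighborhood), $c_1,c_2$ chosen so $(c_1,c_2)$ lies inside the realizable spacing interval of the $k$-square strip, and the square count $\leq 6$.

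The main obstacle is step (3): verifying that a short strip genuinely has enough degrees of freedom to realize an open neighborhood of target ending-edge configurations, i.e. that the parameter-to-configuration map is a submersion at the flat strip and that the fat condition survives throughout the relevant open set. Concretely one must check the Jacobian of ``(endpoint of last transverse edge, direction of last transverse edge)'' with respect to ``(lengths and tilts of the internal transverse edges)'' has full rank at the flat configuration; this is a finite linear-algebra computation whose non-degeneracy is intuitively clear (lengthening one internal edge fans the strip open, tilting internal edges shears it) but does need to be pinned down, and it is what forces $k\geq 2$ and dictates how large $\phi$ and how short $(c_1\alpha,c_2\alpha)$ must be. Everything else — fatness of the reference square, the Euclidean-to-hyperbolic transfer, bounding the square count by $6$ — is routine given the preliminaries already proved.
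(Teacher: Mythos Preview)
Your approach is correct in outline but takes a genuinely different route from the paper. The paper argues directly in $\Hyp^2$ with an explicit ``bar-and-joint'' construction: start from a segment $u_1v_1$ perpendicular to $\ell$, attach a regular quadrangle and vary one base angle to slide the far edge $u_2v_2$ off $\ell$; attach a second quadrangle and choose its base angle to bring the midpoint of $u_3v_3$ back onto $\ell$ --- this two-quadrangle block realises a one-parameter family of angles between $u_3v_3$ and $\ell$. Two more quadrangles restore perpendicularity while allowing the along-$\ell$ spacing to vary over an interval, and a final pair sets the terminal angle, giving six quadrangles in total. So the paper decomposes the target degrees of freedom (angle at the start, spacing, angle at the end) and handles them in sequence with two quadrangles each, rather than invoking a global submersion argument.

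Your IFT/degrees-of-freedom argument is a legitimate alternative and arguably cleaner conceptually, but note two points. First, your Euclidean-to-hyperbolic transfer in step~(5) relies on the working region having small diameter; you invoke $\alpha=O(1/n)$ from the surrounding section, which is fine in context, but the paper's direct hyperbolic construction does not need this and would work for any bounded $\alpha$. Second, your step~(3) asserts full rank of the parameter-to-endpoint Jacobian at the flat strip; this is true, but the paper's sequential construction sidesteps the linear-algebra verification entirely by exhibiting, for each target coordinate, a specific pair of quadrangles whose deformation moves that coordinate monotonically --- in effect a triangular Jacobian by construction. The trade-off is that the paper's argument is more hands-on and yields the bound~$6$ transparently, while yours is more uniform but leaves the rank check and the precise square count (why $6$ and not more) as exercises.
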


\begin{figure}[t]
\centering
\includegraphics{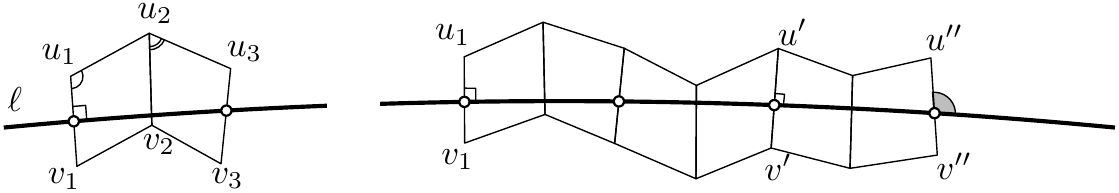}
\caption{Adjusting quadrangles near a line or hypercycle. }\label{fig:barjoint}
\end{figure}

\begin{proof}[Proof sketch]
We regard a strip of fat quadrangles as a system of bars and joints. Let $u_1v_1$ be a segment of length $\alpha$ perpendicular to $\ell$, and consider first a regular quadrangle $u_1v_1v_2u_2$, see Figure~\ref{fig:barjoint}. By increasing or decreasing the angle $v_1u_1u_2$, we can shift the edge $u_2v_2$ ``up'' and ``down''. We attach a regular quadrangle at $u_2v_2$, with vertices $u_2v_2v_3u_3$. If our original shift was small enough, we can decrease the angle $v_2u_2u_3$ until the midpoint of $u_3v_3$ is on $\ell$, and the quadrangle $u_2v_2v_3u_3$ stays fat.

Notice that for any choice of the angle $\an v_1u_1u_2$ within some small interval of the angle of the regular quadrangle, there is a unique corresponding $\an v_2u_2u_3$ for which the midpoint of $u_3v_3$ is on $\ell$. Moreover, the distance of the midpoints of $uv$ and $u_3v_3$ as well as the angle of $u_3v_3$ and $\ell$ are continuous non-constant functions of $\an v_1u_1u_2$. It can be shown that the set of angles between $\ell$ and $u_3v_3$ achievable this way cover some interval $[\pi/2-\phi,\pi/2+\phi]$. With two additional quadrangles, we can get an interval $u'v'$ with its midpoint on $\ell$ that is perpendicular to $\ell$, and whose distance from $uv$ can be varied in some small interval based on our choices for the inner angles of each fat quadrangle. The possible distances attainable between $u_1v_1$ and $u'v'$ cover some interval $[c_1\alpha,c_2\alpha]$, where $c_1< c_2$.

By adding two further quadrangles after $u'v'$, we get a strip of $6$ fat quadrangles where we can also customize the angle of the ending edge $u''v''$. we can construct a sequence of $6$ fat rectangles with ending edges $u_1v_1$ and $u''v''$ that have the desired properties.
\end{proof}

As a corollary, we can create a sequence of fat quadrangles between a pair of ending edges, as long as those ending edges are far enough from each other, while staying in a small neighborhood of a line or hypercycle arc.

\begin{corollary}\label{cor:getstrip}
There exists a constant $c$ such that the following holds.
Let $ab$ and $a'b'$ be segments of length $\alpha$ within distance at most $\alpha$ from $p$ and $q$ respectively. Suppose that $p$ and $q$ are on a line or hypercycle arc $\ell$, where $|pq|=x$ (respectively, $|\arc{pq}|=x$). Then there is a strip consisting of $O(x/\alpha)$ fat quadrangles with ending edges $ab$ and $a'b'$ whose quadrangles are within distance $10\alpha$ of $pq$ (respectively $\arc{pq}$).
\end{corollary}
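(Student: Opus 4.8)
The plan is to reduce the Corollary to repeated applications of Lemma~\ref{lem:adjuststrip}: cut the curve $\ell$ between $p$ and $q$ into $O(x/\alpha)$ short pieces, erect a standard perpendicular ``rung'' at each cut point, connect consecutive rungs by a strip of at most $6$ fat quadrangles given by Lemma~\ref{lem:adjuststrip}, and concatenate all of these into one long strip that hugs $\ell$.

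Concretely, first I would subdivide $\ell$ between $p$ and $q$. When $x$ is at least a fixed multiple of $\alpha$ there is an integer $k=O(x/\alpha)$ with $x/k$ in the admissible interval $(c_1\alpha,c_2\alpha)$ (the constants of Lemma~\ref{lem:adjuststrip}), and I mark points $p=p_0,p_1,\dots,p_k=q$ along $\ell$ so that consecutive points are at along-$\ell$ distance $x/k$; if instead $x$ is smaller than that fixed multiple, $ab$ and $a'b'$ already lie within $O(\alpha)$ of each other and a single block of $O(1)$ fat quadrangles joins them, so assume from now on that $x$ is large. At each interior point $p_i$ with $1\le i\le k-1$, I erect the segment $\sigma_i$ of length $\alpha$ whose midpoint is $p_i$ and which meets $\ell$ perpendicularly at $p_i$. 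For every $i$ with $1\le i\le k-2$, the pair $\sigma_i,\sigma_{i+1}$ satisfies the hypotheses of Lemma~\ref{lem:adjuststrip} — midpoints on $\ell$, distance in $(c_1\alpha,c_2\alpha)$, angle exactly $\pi/2$ — so it provides a strip $T_i$ of at most $6$ fat quadrangles with ending edges $\sigma_i$ and $\sigma_{i+1}$.

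It remains to connect $ab$ to $\sigma_1$ and $\sigma_{k-1}$ to $a'b'$. In each case the two segments have length $\alpha$ and lie within $O(\alpha)$ of one another, so — using the bar-and-joint reasoning from the proof of Lemma~\ref{lem:adjuststrip}, with the drawing of the strip allowed to curve so as to pick up the position and orientation of $ab$ — a strip $T_0$ (respectively $T_{k-1}$) of $O(1)$ fat quadrangles joins them while staying within $O(\alpha)$ of $p$ (respectively $q$). Gluing $T_0,T_1,\dots,T_{k-1}$ along the shared rungs $\sigma_1,\dots,\sigma_{k-1}$ — each $\sigma_i$ is an ending edge of both $T_{i-1}$ and $T_i$ — yields a grid graph of width $2$, i.e.\ a strip, whose ending edges are $ab$ and $a'b'$. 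Every quadrangle of it is individually fat, and since the strip threads monotonically along $\ell$, which is a line or a single hypercycle arc and hence never brings two far-apart (along $\ell$) points close in $\Hyp^2$, the blocks are pairwise interior-disjoint, the drawing is simple, and the strip is globally fat. The quadrangle count is $\sum_i|T_i|\le 6k+O(1)=O(x/\alpha)$. For the location bound, each $T_i$ with $1\le i\le k-2$ is built by Lemma~\ref{lem:adjuststrip} around the sub-arc of $\ell$ from $p_i$ to $p_{i+1}$, hence lies within a fixed multiple of $\alpha$ of that sub-arc; taking $c_1,c_2,\phi$ in Lemma~\ref{lem:adjuststrip} small enough makes this multiple at most $10$, and since the sub-arc is contained in $pq$ (respectively $\arc{pq}$), all of $T_i$ lies within $10\alpha$ of $pq$; the two end blocks lie within $O(\alpha)\le 10\alpha$ of $p$ and of $q$. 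This is the required strip.

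The step I expect to be the genuine obstacle is the two end blocks $T_0$ and $T_{k-1}$. Lemma~\ref{lem:adjuststrip} assumes its ending edge lies on $\ell$ and is nearly perpendicular to it, whereas the Corollary only promises $ab$ within distance $\alpha$ of $p$, with no control on its orientation. One must therefore either strengthen the Corollary's hypotheses to say that $ab$ (and $a'b'$) are already roughly perpendicular to $\ell$ near $p$ (respectively $q$) — which is what every later application of the Corollary supplies — or establish the mild strengthening that any two length-$\alpha$ segments lying within $O(\alpha)$ of each other can be joined by a strip of $O(1)$ fat quadrangles regardless of orientation, via a careful but routine curving argument. Everything else — the subdivision, the concatenation being a strip, simplicity of the drawing, and the $10\alpha$ bound — is bookkeeping on top of Lemma~\ref{lem:adjuststrip}.
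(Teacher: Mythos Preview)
The paper gives no proof of this corollary at all; it is stated as an immediate consequence of Lemma~\ref{lem:adjuststrip} and the text moves directly on to the construction. Your plan---subdivide $\ell$ into pieces of length in $(c_1\alpha,c_2\alpha)$, apply Lemma~\ref{lem:adjuststrip} on each piece, and concatenate---is exactly the natural way to unpack that implication, and is presumably what the author had in mind.

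Your identification of the endpoint issue is accurate and worth flagging: Lemma~\ref{lem:adjuststrip} does assume midpoints on $\ell$ and near-perpendicular angle, while the corollary as stated only places $ab$ within distance $\alpha$ of $p$. Your proposed resolution---that in every actual use of the corollary (vertex gadgets and bend quadrangles in Lemma~\ref{lem:lower_const}) the ending edges are already essentially perpendicular to $\ell$ with midpoints on or very near it---matches how the paper deploys the result, so the gap is one of statement rather than substance. The alternative you mention, a constant-size ``docking'' block joining an arbitrary length-$\alpha$ segment near $p$ to a perpendicular rung at $p_1$, also works and is the same bar-and-joint continuity argument as in the proof sketch of Lemma~\ref{lem:adjuststrip}.
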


We are now ready to define our construction.

\begin{lemma}\label{lem:lower_const}
There is an $\alpha$-spaced point set $P$ of size $O(n^2)$ in $\Hyp^2$ such that the graph given by connecting point pairs at distance exactly $\alpha$ has a Hamiltonian cycle if and only if $G_0$ has a Hamiltonian cycle. Moreover, given $G_0$ with its grid drawing, we can  create a point set $P'$ with a word-RAM machine where for each $p\in P$ there is a unique $p'\in P'$ such that $\dist_{\Hyp^2}(p,p')< \frac{\alpha}{n^3}$ in $O(n^2)$ time.
\end{lemma}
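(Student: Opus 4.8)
The plan is to realize a graph $G\in\cG(G_0)$ inside the grid‑like structure of Figure~\ref{fig:poingrid} and take $P$ to be its vertex set. Fix the grid‑spacing constant $c$ to be large — larger than the constant of Corollary~\ref{cor:getstrip}, and large enough that the $10\alpha$‑neighborhoods appearing there are pairwise disjoint for vertex‑disjoint grid paths; this costs nothing because $c$ enters the geometry only through $\cosh(nc\alpha)=\cosh(c)=\Theta(1)$ in \eqref{eq:gridedge}, so all of $P$ stays within $O(1)$ of the origin. For each vertex $v$ of $G_0$ drawn at grid cell $(i,j)$, place the corresponding vertex gadget of Figure~\ref{fig:vertexgadget} near $f(i,j)=a_i\cap\ell_j$. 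For each arc $uv$ of $G_0$, drawn (after a constant‑factor refinement of the grid, so that distinct arcs use non‑adjacent grid paths) as a vertex‑disjoint grid path, build a tentacle: along each straight run of grid edges on some $\ell_j$ or some $a_i$ — of hyperbolic length $\Theta(\alpha)$ by \eqref{eq:gridedge} — invoke Corollary~\ref{cor:getstrip} to obtain a strip of $O(1)$ fat quadrangles confined to the $10\alpha$‑neighborhood of that run; at each corner of the grid path join two such strips with a bend (the ``bending'' operation on strips) plus an $O(1)$‑quadrangle adjustment from Lemma~\ref{lem:adjuststrip}; and attach the tentacle's ending edges to the designated ending edges at the gadgets of $u$ and $v$, again using the flexibility of Lemma~\ref{lem:adjuststrip} on the last $O(1)$ quadrangles while keeping them \emph{fat}.

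Spacing and size are checked next. Each gadget and each quadrangle is fat by construction, so any two of its vertices that are not the two endpoints of a common $\alpha$‑edge are at distance $>(1+\eps)\alpha$. Two points lying on different tentacles, or on a tentacle and a non‑incident gadget, are contained in the $10\alpha$‑neighborhoods of two vertex‑disjoint grid paths; since distinct grid cells are at mutual hyperbolic distance $\geq c\alpha$ (all grid edges have length in $[c\alpha,c\alpha\cosh c]$), taking $c$ large forces these neighborhoods, hence the two points, more than $(1+\eps)\alpha$ apart. Incident tentacles and gadgets overlap only along their shared ending edge, which is absorbed by the gadget's own fatness. Hence $P$ is $\alpha$‑spaced, and the graph obtained by joining point pairs of $P$ at distance exactly $\alpha$ is (isomorphic to) some $G\in\cG(G_0)$. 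For the size, $G_0$ has $O(n^2)$ arcs and the grid paths, being vertex‑disjoint inside the $n\times n$ grid, occupy $O(n^2)$ grid edges in total; by \eqref{eq:gridedge} and Corollary~\ref{cor:getstrip} each grid edge contributes $O(1)$ fat quadrangles, so $|P|=O(n^2)$. The Hamiltonicity equivalence is then immediate from Lemma~\ref{lem:equiv_G}: the $\alpha$‑distance graph of $P$ is some $G\in\cG(G_0)$, so it has a Hamiltonian cycle iff $G_0$ has a directed Hamiltonian cycle.

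It remains to produce a rational point set $P'$ on a word‑RAM. The obstruction is that the construction above uses exact hyperbolic data (equidistant arcs, $\cosh^{-1}$, bars of length exactly $\alpha$) that a word‑RAM cannot evaluate exactly. I would argue that the entire construction is numerically stable on the scale $\alpha/n^3$: the gadgets and connectors are assembled from $O(1)$‑many fat quadrangles, each obtained (via Lemma~\ref{lem:adjuststrip}) by solving a fixed, constant‑size system of continuous, non‑degenerate equations whose solutions depend Lipschitz‑continuously on $O(1)$ parameters and lie within $O(1)$ of the origin, and every separation used above (fatness $\eps\alpha$, inter‑tentacle distance $c\alpha$) has a fixed positive margin. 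Consequently it suffices to carry out every coordinate computation at $\Theta(\log n)$ bits of precision: the grid points $f(i,j)=a_i\cap\ell_j$ are computed incrementally by composing constant‑size Möbius steps (one step per grid unit along a line or along a hypercycle), each step being $O(1)$ word operations on $\Theta(\log n)$‑bit numbers, after a one‑time $\poly\log n$ computation of the $O(1)$‑many transcendental constants that define the step maps and the template gadgets; placing the $O(n^2)$ template copies and rounding then yields, for every $p\in P$, a rational $p'$ with $\dist_{\Hyp^2}(p,p')<\alpha/n^3$, and all the strict inequalities survive with room to spare. The total time is $O(n^2)$. I expect this last part — pinning down that $\Theta(\log n)$ bits suffice, organizing the grid computation so the running time is $O(n^2)$ rather than $O(n^2\poly\log n)$, and verifying that the perturbation preserves every adjacency and non‑adjacency the reduction relies on — to be the main technical obstacle, though it is conceptually routine given the explicit slack built into the fat‑quadrangle machinery of Lemma~\ref{lem:adjuststrip} and Corollary~\ref{cor:getstrip}.
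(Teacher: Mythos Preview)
Your proposal is correct and follows essentially the same route as the paper: place vertex gadgets at the grid images $f(i,j)$, route tentacles along the lines $\ell_j$ and hypercycles $a_i$ using Corollary~\ref{cor:getstrip}, handle turns with a bend quadrangle, choose the grid constant $c$ large enough to keep the $10\alpha$-tubes around distinct grid paths disjoint, and finish with a $\Theta(\log n)$-bit approximation on the word-RAM. The paper is terser (it simply fixes $c=21$ and asserts the $O(\log n)$-bit computation), while you spell out the spacing verification and the numerical-stability argument more carefully; your added grid refinement to make arc paths non-adjacent is unnecessary once $c$ is large, but harmless.
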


\begin{proof}
Consider a grid point $(i,j)$ in the drawing of $G_0$ that corresponds to a vertex $v$ of~$G_0$. Depending on the indegree of $v$, we place a corresponding vertex gadget at $f(i,j)$ (that is, for indegree $2$, the vertex is placed at $f(i,j)$, and for indegree $1$, the midpoint of $ab$ is placed at $f(i,j)$). If $(i,j)$ is a point where an edge of $G_0$ has a right-angle turn, then we place a regular quadrangle $uvwx$ such that the midpoints of $uv$ and $wx$ are on $\ell_j$ and the midpoints of $vw$ and $xu$ are on $a_i$. We call these quadrangles corresponding to right-angle turns of $G_0$ \emph{bend quadrangles}.

Finally, consider a straight segment in the drawing of $G_0$ from $(i,j)$ to $(i,j')$ that is part of an arc's path, such that both $(i,j)$ and $(i,j')$ correspond to either a vertex of $G_0$ or a right-angle turn of the arc. By Corollary~\ref{cor:getstrip}, we can connect the corresponding edge of the vertex gadget or bend quadrangle with a strip that stays in a small neighborhood of $a_i$. We can analogously connect vertices or bends of coordinates $(i,j)$ to $(i',j)$ in a small neighborhood of $\ell_j$. If we choose $c=21$, then the embedded grid has distance at least $21\alpha$ between neighboring $a_i$ and $\ell_j$, and the above constructed strips will remain disjoint from each other (and from non-incident vertex gadgets and bend quadrangles). Putting the strips and bend quadrangles together, we get a tentacle for each edge of $G_0$, therefore the resulting graph $G$ is a member of~$\cG$.

The size of the construction is $O(n^2)$, as each grid edge is represented by $O(1)$ quadrangles. Using $O(\log n)$ bits to represent the coordinates of each point in the Poincar\'e model, we can follow the above construction to get an approximate point set $P'$ as required in $O(n^2)$ time.
\end{proof}

\begin{proof}[Proof of Theorem~\ref{thm:lowerHTSP}]
Let $G_0$ be a directed planar graph in $\cC$ with a given drawing in the $n\times n$ grid. Based on the drawing, we invoke Lemma~\ref{lem:lower_const}, which results in a graph $G$ and a corresponding set $P'$, where each edge of $G$ corresponds to a point pair at distance at least $\alpha(1-2/n^3)$ and at most $\alpha(1+2/n^3)$, and any pair of points in $P'$ that are not connected in $G$ have distance at least $\alpha(1+\eps-2/n^3)$. Consequently, the resulting set is $\alpha'=\alpha(1-2/n^3)=\Theta(1/n)$-spaced. Since we have $|P'|=O(n^2)$, the spacing is $\alpha'=\Theta(\sqrt{|P'|})$ as required.

We claim that there is a TSP tour in $P'$ of length at most $n^2\alpha(1+2/n^3)$ if and only if $G_0$ has a directed Hamiltonian cycle. By Lemma~\ref{lem:equiv_G}, we have that $G_0$ has a directed Hamiltonian cycle if and only if the constructed graph $G$ does. Note that any Hamiltonian cycle of $G$ corresponds to a TSP tour of $P'$ of length at most $n^2\alpha(1+2/n^3)$. A TSP tour of $P'$ that contains $t\geq 1$ segments that do not correspond to edges of $G$ must have length at least $(n^2-t)\alpha(1-2/n^3)+t\alpha(1+\eps-2/n^3)=n^2\alpha(1-2/n^3)+t\eps\alpha$. For $n$ large enough, we have that $\eps>4/n$, therefore
\[n^2\alpha(1-2/n^3)+t\eps\alpha>n^2\alpha(1-2/n^3)+t(4/n)\alpha\geq n^2\alpha(1+2/n^3),\]
thus any tour containing a segment which is not an edge of $G$ is strictly longer than $n^2\alpha(1+2/n^3)$. 

In $O(n^2)$ time, we have created a point set $P'$ with spacing $\Theta(1/\sqrt{|P'|})$ which has a TSP tour of a certain length if and only if there is a directed Hamiltonian cycle in $G_0$. If we could solve TSP on any such $P'$ in $2^{o(\sqrt{|P'|})}$ time, then that algorithm could be composed with the above construction to yield a $2^{o(n)}$ algorithm for directed Hamiltonian cycle in $\cC$, which would contradict ETH.
\end{proof}

\section{Conclusion}

We have devised a separator theorem in $\Hyp^2$ that led to a quasi-polynomial algorithm for \HTSP on constant-spaced point sets. For $\alpha$-spaced point sets with spacing $\alpha\geq \log^2n/\sqrt{n}$ our algorithm runs in $n^{O(\log^2 n)\max(1,1/\alpha)}$ time. When the point set has spacing only $\Theta(\log^2 n/\sqrt{n})$, the algorithm's performance degrades to the point of reaching (roughly) the performance of the Euclidean algorithm. If the point set has even closer point pairs, then the algorithm of Hwang~\etal~\cite{HwangCL93} can be used to obtain a running time of $n^{O(\sqrt{n})}$. We have shown that our algorithm's dependence on density is necessary and for spacing $1/\sqrt{n}$, it cannot be significantly improved under ETH. There are several intriguing questions that are left open. We list some of these questions below.
\begin{itemize}
\item \textbf{Improving the running time, lower bounds.} There is a considerable gap between the running time for \HC in hyperbolic unit disk graphs (which is polynomial) and our \HTSP algorithm, which for constant $\alpha$ runs in $n^{O(\log^2(n))}$ time. Is there an $n^{O(\log n)}$ or a polynomial algorithm for $\alpha\geq 1$? Alternatively, can we prove a (conditional) superpolynomial lower bound? Such a lower bound would have to go beyond the quasi-polynomial lower bound for \IS seen in~\cite{hyperbolic_inters}, as that relies heavily on dense point sets which are not allowed for $\alpha=\Omega(1)$. Another approach would be to use the na\"ive grid embedding of~\cite{hyperbolic_inters} directly, but that does not lead to a superpolynomial lower bound here. 
\item \textbf{Higher dimensions.} The grid-based lower-bound framework of~\cite{frameworkpaper} can be used in $\Hyp^{d+1}$, see~\cite{hyperbolic_inters}. In particular, the ETH-based lower bound of~\cite{BergBKK18} for \ETSP implies that there is no $2^{o(n^{1-1/(d-1)})}$ algorithm for \HTSP in $\Hyp^d$ under ETH. Can we extend our algorithmic techniques to constant-spaced point sets in $\Hyp^d$ and gain algorithms with running time $2^{n^{1-1/(d-1)}\poly(\log n)}$? What happens for denser point sets? As observed in~\cite{BergBKK18}, the techniques of Hwang~\etal~\cite{HwangCL93} do not even seem to extend to $\Reals^d$ for $d\geq 3$. Is a running time of $2^{n^{1-1/d}\poly(\log n)}$ possible for all point sets in $\Hyp^d$?
\item \textbf{A less forgiving parameter.} Our usage of the spacing parameter $\alpha$ may be too restrictive. Is there a better algorithm that can handle more general inputs that can contain a few close point pairs?
\end{itemize}

\bibliography{hypertsp}

\end{document}